\newtheorem{mythm}{Theorem} \numberwithin{mythm}{section}
\newtheorem{mylemma}[mythm]{Lemma}
\newtheorem{mydef}[mythm]{Definition}
\newtheorem{myrek}[mythm]{Remark}
\newtheorem{myassump}[mythm]{Assumption}
\newtheorem{mycor}[mythm]{Corollary}
\DeclareMathAlphabet\scr{U}{scr}{m}{n}
\SetMathAlphabet\scr{bold}{U}{scr}{b}{n}
  \DeclareFontFamily{U}{scr}{\skewchar\font'177}%
  \DeclareFontShape{U}{scr}{m}{n}{<-6>rsfs5<6-8>rsfs7<8->rsfs10}{}%
  \DeclareFontShape{U}{scr}{b}{n}{<-6>rsfs5<6-8>rsfs7<8->rsfs10}{}%
\numberwithin{equation}{section}
\newcommand{\auf}{[\![}
\newcommand{\zu}{]\!]}
\begin{document}

\title{\vspace{-1.5cm} High-Resilience Limits of Block-Shaped Order Books\footnote{We thank Yan Dolinsky, Martin Herdegen, Sebastian Hermann, and Ren Liu for fruitful discussions.}}
\author{
Jan Kallsen
\thanks{Christian-Albrechts-Universit\"at zu Kiel, Mathematisches Seminar, Westring 383, D-24098 Kiel, Germany, email \texttt{kallsen@math.uni-kiel.de}.}
\and
Johannes Muhle-Karbe\thanks{ETH Z\"urich, Departement f\"ur Mathematik, R\"amistrasse 101, CH-8092, Z\"urich, Switzerland, and Swiss Finance Institute, email \texttt{johannes.muhle-karbe@math.ethz.ch}. Partially supported by the ETH Foundation.}
}
\date{September 25, 2014}
\maketitle

\maketitle

\begin{abstract}
We show that wealth processes in the block-shaped order book model of Obizhaeva/Wang~\cite{obizhaeva.wang.13} converge to their counterparts in the reduced-form model proposed by Almgren/Chriss~\cite{almgren.chriss.99,almgren.chriss.01}, as the resilience of the order book tends to infinity. As an application of this limit theorem, we explain how to reduce portfolio choice in highly-resilient Obizhaeva/Wang models to the corresponding problem in an Almgren/Chriss setup with small quadratic trading costs.
\end{abstract}

\section{Introduction}

Market prices are adversely affected by large orders executed quickly. This ``price impact'' constitutes the principal trading cost for large institutional investors and hedge funds. Accordingly, there is a large and growing literature studying how to mitigate these costs by smart scheduling of the order flow (see, e.g., \cite{gokay.al.12,gatheral.schied.13} for recent overviews).

Two of the most widely used models were proposed by Almgren and Chriss as well as Obizhaeva and Wang, respectively. Almgren and Chriss \cite{almgren.chriss.99,almgren.chriss.01,almgren.03} put forward a \emph{reduced-form} model that directly specifies functions describing the temporary and permanent impacts of a given order.\footnote{Similar models were proposed and studied concurrently by Bertsimas and Lo~\cite{bertsimas.lo.98}, Madhavan~\cite{madhavan.00}, as well as Huberman and Stanzl~\cite{huberman.stanzl.05}. To keep in line with most of the literature, we nevertheless stick to the nomenclature ``Almgren/Chriss-model''.} Concrete specifications of this model are very tractable, for optimal execution \cite{almgren.chriss.01,almgren.03,schied.schoeneborn.09} and -- if price impact is linear -- also for more involved problems like portfolio choice \cite{garleanu.pedersen.13a,garleanu.pedersen.13b, guasoni.weber.12,moreau.al.14} and hedging \cite{almgren.li.11,moreau.al.14}. In particular, fully explicit formulas obtain in the limit for small linear temporary impacts even in very general settings \cite{moreau.al.14}. 

The model of Obizhaeva/Wang \cite{obizhaeva.wang.13} is \emph{structural}, in that it does not directly posit some price impact functions, but instead starts from a description of the underlying limit order book. In today's electronic markets, the latter collects all outstanding limit buy and sell orders and thereby provides a snapshot of the current liquidity on offer. Obizhaeva and Wang assume that the order book is ``block-shaped'', i.e., liquidity is distributed uniformly beyond the best bid and ask prices.\footnote{Compare, e.g., \cite{alfonsi.al.10,predoiu.al.11} for extensions to more general order book shapes.} As each trade eats into the order book, this leads to another model with linear price impact. Yet, after each trade, the book now recovers at some finite resilience rate, leading to a ``transient'' price impact that is neither purely temporary nor fully permanent. While risk-neutral optimal execution problems are still feasible in this setting \cite{obizhaeva.wang.13,alfonsi.al.10,predoiu.al.11}, more complex optimizations have proved to be elusive so far. 

In the present study, we show that the models of Almgren/Chriss and Obizhaeva/Wang are intimately related. Indeed, as the resilience of the order book in the Obizhaeva/Wang model grows, the corresponding wealth processes converge to their counterparts in the Almgren/Chriss setting with small linear temporary price impact. The latter is determined explicitly in terms of observable properties of the order book, namely i) its height, ii) its resilience, and iii) how symmetrically it recovers on the bid and the ask sides.

In view of this limit theorem, optimization problems in highly resilient Obizhaeva/Wang models can be related to their counterparts in Almgren/Chriss models with small linear price impacts. We illustrate this for portfolio choice, using the results of Moreau et al.~\cite{moreau.al.14} in the framework of Almgren/Chriss to obtain asymptotically optimal trading strategies in Obizhaeva/Wang settings.

The remainder of the paper is organized as follows. Section~\ref{sec:trading} introduces a general version of the block-shaped limit order book model of Obizhaeva and Wang, in particular the corresponding wealth dynamics. In Section~\ref{sec:AC}, we turn to reduced-form models of Almgren/Chriss-type. Section~\ref{sec:Asymp} contains our main results showing that Almgren/Chriss-models arise as the high-resilience limits of block-shaped order books. Finally, the implications of this limit theorem for portfolio choice are discussed in Section~\ref{sec:PC}.

\section{Trading with Block-Shaped Order Books}\label{sec:trading}

We fix a filtered probability space $(\Omega,\scr{F}, (\scr{F}_t)_{t \in \mathbb{R}_+},P)$ and consider a general version of the order book model proposed by Obizhaeva and Wang \cite{obizhaeva.wang.13}.\footnote{A similar model with stochastic market depth and resilience has been proposed in \cite{roch.soner.13}.} To wit, there is a safe asset, with price process normalized to $S^0 \equiv 1$, and a risky asset. The \emph{fundamental price process} of the latter is modeled as a continuous semimartingale $S$, traded with exogenous proportional transaction costs $\varepsilon^\downarrow$, $\varepsilon^\uparrow$ for selling and buying, respectively. This means that in the absence of a large trader, infinitesimal amounts can be sold for $S_t-\varepsilon^\downarrow_t$ and purchased for $S_t+\varepsilon^\uparrow_t$ at time $t$, respectively, i.e., $\varepsilon^\downarrow$, $\varepsilon^\uparrow$ are the \emph{unaffected bid-ask spreads}.

In addition to this exogenous friction, large trades also move market prices. Following Obizhaeva and Wang \cite{obizhaeva.wang.13}, this is described by a block-shaped order book with finite volume. To make this precise, consider a \emph{trading strategy} described by an adapted, right-continuous finite variation process $\varphi=\varphi^\uparrow-\varphi^\downarrow$.\footnote{Here, $\varphi^\uparrow-\varphi^\downarrow$ is the decomposition of $\varphi$ into minimal nondecreasing right-continuous processes $\varphi^\uparrow,\varphi^\downarrow$.} The block-shaped order book has height $h_t^\uparrow$ on the buy side, and height $h^\downarrow_t$ on the sell side. Hence, a buy order $d\varphi^\uparrow_t$ increases the best ask price by $\frac{1}{h^\uparrow_t}d\varphi^\uparrow_t$, whereas the current best bid price remains unchanged. After the completion of each trade, the bid-ask prices do no necessarily mean revert back to the fundamental price, but to some \emph{reference price} that also takes into account the permanent impact of past trades. As in \cite{obizhaeva.wang.13}, this permanent impact on the reference price is linear,\footnote{Non-linear permanent price impact typically leads to ``quasi-arbitrage'' due to price manipulation, see Huberman and Stanzl~\cite{huberman.stanzl.04}.} given by a fraction $\alpha^\uparrow_t \in [0,1/2]$ of the move of the best ask price.\footnote{Here, $\alpha^\uparrow \equiv 1/2$ means that the reference quote is the mid price. For $\alpha^\uparrow \equiv 0$, one instead uses the fundamental price $\tilde{S}$ because there is no permanent price impact.} Therefore, a buy order $d\varphi^\uparrow_t$ increases the ask spread $\varepsilon^\uparrow_t$ relative to the reference quote by $\frac{1-\alpha^\uparrow_t}{h^\uparrow_t}d\varphi^\uparrow_t$ and the bid spread $\varepsilon^\downarrow_t$ by $\frac{\alpha_t^\uparrow}{h^\uparrow_t}d\varphi^\uparrow_t$.\footnote{Hence, $\alpha^\uparrow \leq 1/2$ ensures that buy trades do not affect the spread on the bid side more than on the ask side.} After each purchase, the bid and ask spreads mean revert back to the exogenous baseline levels $\varepsilon^\downarrow_t$, $\varepsilon^\uparrow_t$ with resilience rates $\kappa^\downarrow_t$ and $\kappa^\uparrow_t$, respectively. The effects of sell orders are analogous. 

\medskip 

In summary, the reference price follows
\begin{equation}\label{eq:level}
dS^\varphi_t:= dS_t +\frac{\alpha_t^\uparrow}{h_t^\uparrow} d\varphi^\uparrow_t-\frac{\alpha_t^\downarrow}{h^\downarrow_t}d\varphi^\downarrow_t, 
\end{equation}
and the corresponding bid-ask spreads evolve according to
\begin{align}
d\varepsilon^{\varphi,\downarrow}_t &= \kappa^\downarrow_t(\varepsilon^\downarrow_t- \varepsilon^{\varphi,\downarrow}_t) dt + \frac{1-\alpha^\downarrow_t}{h^\downarrow_t}d\varphi^\downarrow_t+\frac{\alpha^\uparrow_t}{h^\uparrow_t}d\varphi^\uparrow_t+ d\varepsilon^\downarrow_t, \quad \varepsilon^{\varphi,\downarrow}_0=\varepsilon^{\downarrow}_0, \label{eq:veps1}\\
d\varepsilon^{\varphi,\uparrow}_t &= \kappa^\uparrow_t (\varepsilon^\uparrow_t-\varepsilon^{\varphi,\uparrow}_t) dt +\frac{1-\alpha^\uparrow_t}{h^\uparrow_t}d\varphi^\uparrow_t+\frac{\alpha^\downarrow_t}{h^\downarrow_t}d\varphi^\downarrow_t+d\varepsilon^\uparrow_t, \quad \varepsilon^{\varphi,\uparrow}_0=\varepsilon^{\uparrow}_0. \label{eq:veps2}
\end{align}

Throughout, we impose the following regularity conditions which imply, in particular, that the reference quote and the corresponding bid-ask spreads are well defined for any trading strategy $\varphi$:

\begin{myassump}
$\kappa^\uparrow,\kappa^\downarrow,h^\uparrow,h^\downarrow$ are positive, $\alpha^\uparrow,\alpha^\downarrow$ take values in $[0,1/2]$, and all of these processes  are continuous and adapted. $\varepsilon^\downarrow$, $\varepsilon^\uparrow$ are positive, continuous semimartingales.
\end{myassump}

For any given trading strategy $\varphi=\varphi^\uparrow-\varphi^\downarrow$, the corresponding bid-ask spreads $\varepsilon^{\varphi,\downarrow}, \varepsilon^{\varphi,\uparrow}$ from (\ref{eq:veps1}-\ref{eq:veps2}) solve linear SDEs with exogenous driving terms. Hence, \cite[Theorem V.52]{protter.05} shows that they are given explicitly by 
\begin{align}
\varepsilon^{\varphi,\downarrow}_t &=\varepsilon^\downarrow_t+\int_0^t e^{-\int_s^t \kappa^\downarrow_r dr}\left(\frac{1-\alpha_s^\downarrow}{h_s^\downarrow}d\varphi^\downarrow_s +\frac{\alpha^\uparrow_s}{h^\uparrow_s}d\varphi^\uparrow_s\right), \label{eq:vepsexpl2}\\
\varepsilon^{\varphi,\uparrow}_t &=\varepsilon^\uparrow_t+\int_0^t e^{-\int_s^t \kappa^\uparrow_r dr}\left(\frac{1-\alpha_s^\uparrow}{h_s^\uparrow}d\varphi^\uparrow_s +\frac{\alpha^\downarrow_s}{h^\downarrow_s}d\varphi^\downarrow_s\right). \label{eq:vepsexpl1}
\end{align}

Let us now derive the wealth dynamics corresponding to a self-financing strategy $\varphi$. The self-financing condition posits that all purchases and sales have to be accounted for in the corresponding safe account $\varphi^0$. Since the order book is block shaped, the average execution price for each order is the average between the initial bid or ask price, and the one that obtains after the completion of the trade. This leads to
\begin{equation}\label{eq:phi0}
\begin{split}
d\varphi^0_t &= -\left(S^\varphi_{t-}+\varepsilon^{\varphi,\uparrow}_{t-} +\frac{1}{2h^\uparrow_t}d\varphi^\uparrow_t\right)d\varphi^\uparrow_t+ \left(S^\varphi_{t-}-\varepsilon^{\varphi,\downarrow}_{t-} -\frac{1}{2h^\downarrow_t}d\varphi^\downarrow_t\right)d\varphi^\downarrow_t\\
&=-S^\varphi_{t-}d\varphi_t -\varepsilon^{\varphi,\uparrow}_{t-} d\varphi^\uparrow_t- \varepsilon^{\varphi,\downarrow}_{t-} d\varphi^\downarrow_t -\frac{1}{2h^\uparrow_t}d[\varphi^\uparrow]_t-\frac{1}{2h^\downarrow_t}d[\varphi^\downarrow]_t.
\end{split}
\end{equation}

Now, turn to the risky position evaluated in terms of the reference price $S^\varphi$. Integration by parts as in \cite[I.4.45]{js.03} and \cite[Theorem I.4.52]{js.03} yield
\begin{align*}
d(\varphi_t S^\varphi_t) &= \varphi_{t-} dS^\varphi_t +S^\varphi_{t-} d\varphi_t+d\left[\varphi^\uparrow-\varphi^\downarrow, \frac{\alpha^\uparrow}{h^\uparrow} d\varphi^\uparrow -\frac{\alpha^\downarrow}{h^\downarrow} d\varphi^\downarrow\right]_t\\
&= \varphi_{t-} dS^\varphi_t+S^\varphi_{t-} d\varphi_t+\frac{\alpha^\uparrow_t}{h^\uparrow_t} d[\varphi^\uparrow]_t+\frac{\alpha^\downarrow_t}{h^\downarrow_t} d[\varphi^\downarrow]_t.
\end{align*}

Together, the above formulas lead to the following dynamics of the (paper) wealth process $X^\varphi=\varphi^0+\varphi S^\varphi$ corresponding to a self-financing strategy $\varphi$ with safe account $\varphi^0$ given by \eqref{eq:phi0}:

\begin{mydef}\label{def:wealth}
The \emph{(paper) wealth process} of a strategy $\varphi=\varphi^\uparrow-\varphi^\downarrow$ in the Obizhaeva/Wang model has dynamics
\begin{equation}\label{eq:wealth}
dX_t^{OW}(\varphi):= \varphi_{t-} dS^\varphi_t-\varepsilon^{\varphi,\uparrow}_{t-} d\varphi^\uparrow_t- \varepsilon^{\varphi,\downarrow}_{t-} d\varphi^\downarrow_{t} -\frac{1/2-\alpha^\uparrow_t}{h^\uparrow_t}d[\varphi^\uparrow]_t-\frac{1/2-\alpha^\downarrow_t}{h^\downarrow_t}d[\varphi^\downarrow]_t,
\end{equation}
with the bid-ask spreads $\varepsilon^{\varphi,\downarrow}, \varepsilon^{\varphi,\uparrow}$ and the reference price $S^\varphi$ from (\ref{eq:level}-\ref{eq:veps2}).
\end{mydef}

\section{Reduced-Form Models}\label{sec:AC}

Next, we introduce a general version of the reduced-form models of Almgren and Chriss~\cite{almgren.chriss.99,almgren.chriss.01,almgren.03}.\footnote{Since we want to study the connection to \emph{block-shaped} order books in the sequel, we focus on \emph{linear} temporary price impact here. See \cite{almgren.03} for an extension to nonlinear impact functions.} In this framework, trading strategies $\varphi=\varphi^\uparrow-\varphi^\downarrow$ are restricted to be absolutely continuous with finite turnover rates $\dot{\varphi}_t=\dot{\varphi}^\uparrow_t-\dot{\varphi}^\downarrow_t :=d\varphi_t/dt$, and the corresponding wealth dynamics are given by
\begin{equation}\label{eq:AC}
dX_t^{AC}(\varphi) := \varphi_t dS^\varphi_t -\varepsilon^\uparrow_t d\varphi^\uparrow_t- \varepsilon^\downarrow_t d\varphi^\downarrow_t -\lambda^\uparrow_t (\dot{\varphi}^\uparrow_t)^2 dt-\lambda^\downarrow_t (\dot{\varphi}^\downarrow_t)^2 dt.
\end{equation}
Here, as for the Obizhaeva/Wang-type model in \eqref{eq:level}, the reference price is shifted linearly by past trades:
\begin{equation}\label{eq:shift}
dS_t^\varphi:=dS_t+\gamma^\uparrow_t d\varphi^\uparrow_t-\gamma^\downarrow_t d\varphi^\downarrow_t,
\end{equation}
for continuous processes $\gamma^\uparrow,\gamma^\downarrow \geq 0$. The processes $\varepsilon^\downarrow,\varepsilon^\uparrow$ in \eqref{eq:AC} denote exogenous bid-ask spreads as in Section \ref{sec:trading}, which lead to trading costs linear in the selling and buying rates $\dot{\varphi}^\downarrow, \dot{\varphi}^\uparrow$. Finally, the positive, continuous processes $\lambda^\downarrow,\lambda^\uparrow$ levy additional quadratic trading costs on the respective turnover rates.

The dynamics \eqref{eq:AC} are again motivated by linear price impact. Indeed, suppose a single purchase $d\varphi^\uparrow_t$ executed over a time interval $dt$ moves the average transaction price by $\lambda^\uparrow_t d\varphi^\uparrow_t/dt$, i.e., price impact is linear both in the trade size and the trading speed. Then, the additional execution cost due to price impact is given by $\lambda^\uparrow_t(d\varphi_t^\uparrow/dt)^2 dt$, in line with \eqref{eq:AC}. This price impact is purely temporary, in that each trade does not affect subsequent ones through this channel. (Permanent impact is modeled separately through shifts of the reference quote \eqref{eq:shift}.) Hence, this model corresponds to a block-shaped order book that completely recovers after each trade before the next one is entered. 

This heuristic argument suggests a close connection to highly resilient order books. However, it sheds no light on how continuous trading and finite resilience interact. Making this connection precise is the main contribution of the present paper.  

\section{High-Resilience Asymptotics}\label{sec:Asymp}

We now show that wealth processes in the Obizhaeva/Wang-type model from Section~\ref{sec:trading} resemble their counterparts in a corresponding Almgren/Chriss model in the limit for large resilience, i.e., as the order book recovers faster and faster from executed trades. To this end, write the resiliences $\kappa^\uparrow,\kappa^\downarrow$ as
$$\kappa^\uparrow_t =\kappa K^\uparrow_t, \quad \kappa^\downarrow_t=\kappa K^\downarrow_t,$$
for positive, continuous, adapted processes $K^\uparrow, K^\downarrow$ independent  of the asymptotic parameter $\kappa \in \mathbb{R}_+$ that will be sent to infinity.

Consider an absolutely continuous trading strategy $d\varphi_t=\dot{\varphi}_t dt$. The following limit theorem shows that the corresponding wealth process in an Obizhaeva/Wang model with high resilience can be approximated by its counterpart in a reduced-form model of Almgren/Chriss-type:

\begin{mythm}\label{thm:main}
Fix an initial endowment $x>0$ and consider an absolutely continuous strategy $d\varphi_t=\dot{\varphi}_t dt$ with continuous turnover rate $\dot{\varphi}=\dot\varphi^\uparrow-\dot\varphi^\downarrow$. Then:
\begin{align*}
X^{OW,\kappa}(\varphi)  = x &+\int_0^\cdot \varphi_{t} dS^\varphi_t -\int_0^\cdot \left(\varepsilon^\uparrow_t \dot{\varphi}^\uparrow_t + \varepsilon^\downarrow_t \dot{\varphi}^\downarrow_t\right) dt-\kappa^{-1}\int_0^\cdot \left(\frac{1-\alpha^\uparrow_t}{K^\uparrow_t h^\uparrow_t} (\dot{\varphi}^\uparrow_t)^2 +\frac{1-\alpha^\downarrow_t}{K^\downarrow_t h^\downarrow_t} (\dot{\varphi}^\downarrow_t)^2\right) dt\\
&+o(\kappa^{-1}), \qquad \mbox{uniformly on compacts in probability, as $\kappa \longrightarrow \infty$.} 
\end{align*}
That is, as $\kappa \longrightarrow \infty$, the wealth process $X^{OW,\kappa}(\varphi)$ of $\varphi$ in the Obizhaeva/Wang model with resiliences $\kappa K^\uparrow, \kappa K^\downarrow$ resembles its counterpart $X^{AC,\kappa}(\varphi)$ in an Almgren/Chriss model with temporary quadratic impact costs $\lambda^\uparrow:=\frac{1-\alpha^\uparrow}{\kappa K^\uparrow h^\uparrow}, \lambda^\downarrow:=\frac{1-\alpha^\downarrow}{\kappa K^\downarrow h^\downarrow}=O(\kappa^{-1})$, permanent linear impact $\gamma^\uparrow:=\frac{\alpha^\uparrow}{h^\uparrow}, \gamma^\downarrow:=\frac{\alpha^\downarrow}{h^\downarrow}=O(1)$, and proportional transaction costs $\varepsilon^\uparrow, \varepsilon^\downarrow=O(1).$
\end{mythm}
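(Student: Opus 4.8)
The plan is to start from the exact Obizhaeva/Wang wealth dynamics \eqref{eq:wealth}, substitute the closed-form affected spreads \eqref{eq:vepsexpl2}--\eqref{eq:vepsexpl1}, and analyse the resulting convolution integrals by a Laplace-type argument as $\kappa\to\infty$. For an absolutely continuous strategy, $\varphi^\uparrow$ and $\varphi^\downarrow$ are continuous finite-variation processes, so $[\varphi^\uparrow]$ and $[\varphi^\downarrow]$ vanish and the last two terms of \eqref{eq:wealth} drop out; moreover $\varphi_{t-}=\varphi_t$, and by \eqref{eq:vepsexpl2}--\eqref{eq:vepsexpl1} the processes $\varepsilon^{\varphi,\uparrow},\varepsilon^{\varphi,\downarrow}$ are continuous, so the left limits in the spread terms may be replaced by the processes themselves. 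Writing $\kappa^\uparrow=\kappa K^\uparrow$ and $g^\uparrow_u:=\frac{1-\alpha^\uparrow_u}{h^\uparrow_u}\dot\varphi^\uparrow_u+\frac{\alpha^\downarrow_u}{h^\downarrow_u}\dot\varphi^\downarrow_u$, formula \eqref{eq:vepsexpl1} reads $\varepsilon^{\varphi,\uparrow}_s=\varepsilon^\uparrow_s+\int_0^s e^{-\kappa\int_u^s K^\uparrow_r\,dr}g^\uparrow_u\,du$, so
\[
-\int_0^\cdot\varepsilon^{\varphi,\uparrow}_s\dot\varphi^\uparrow_s\,ds
=-\int_0^\cdot\varepsilon^\uparrow_s\dot\varphi^\uparrow_s\,ds-I^\uparrow_\kappa(\cdot),
\qquad
I^\uparrow_\kappa(t):=\int_0^t\dot\varphi^\uparrow_s\Bigl(\int_0^s e^{-\kappa\int_u^s K^\uparrow_r\,dr}g^\uparrow_u\,du\Bigr)\,ds,
\]
and symmetrically for the sell side, with $g^\downarrow_u:=\frac{1-\alpha^\downarrow_u}{h^\downarrow_u}\dot\varphi^\downarrow_u+\frac{\alpha^\uparrow_u}{h^\uparrow_u}\dot\varphi^\uparrow_u$ and $I^\downarrow_\kappa$. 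The integral $\int_0^\cdot\varphi_s\,dS^\varphi_s$ is $\kappa$-independent (the coefficients in \eqref{eq:level} do not involve $\kappa$) and appears unchanged in the claim, as do the exogenous-spread terms $\int_0^\cdot\varepsilon^\uparrow_s\dot\varphi^\uparrow_s\,ds$ and $\int_0^\cdot\varepsilon^\downarrow_s\dot\varphi^\downarrow_s\,ds$. Hence the entire assertion reduces to showing that $\kappa\bigl(I^\uparrow_\kappa(\cdot)+I^\downarrow_\kappa(\cdot)\bigr)\longrightarrow\int_0^\cdot\bigl(\frac{1-\alpha^\uparrow_s}{K^\uparrow_s h^\uparrow_s}(\dot\varphi^\uparrow_s)^2+\frac{1-\alpha^\downarrow_s}{K^\downarrow_s h^\downarrow_s}(\dot\varphi^\downarrow_s)^2\bigr)\,ds$ uniformly on compacts in probability.

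I would prove this pathwise on a fixed compact $[0,T]$. The heart of the matter is the convergence $\Phi^\kappa_s:=\kappa\int_0^s e^{-\kappa\int_u^s K^\uparrow_r\,dr}g^\uparrow_u\,du\to g^\uparrow_s/K^\uparrow_s$. Using the identity $\kappa\int_0^s e^{-\kappa\int_u^s K^\uparrow_r\,dr}K^\uparrow_u\,du=1-e^{-\kappa\int_0^s K^\uparrow_r\,dr}$ one obtains
\[
\Phi^\kappa_s-\frac{g^\uparrow_s}{K^\uparrow_s}
=\kappa\int_0^s e^{-\kappa\int_u^s K^\uparrow_r\,dr}\Bigl(g^\uparrow_u-K^\uparrow_u\frac{g^\uparrow_s}{K^\uparrow_s}\Bigr)\,du
-\frac{g^\uparrow_s}{K^\uparrow_s}\,e^{-\kappa\int_0^s K^\uparrow_r\,dr}.
\]
On $[0,T]$ the continuous processes $\alpha^\uparrow,\alpha^\downarrow,h^\uparrow,h^\downarrow,K^\uparrow,\dot\varphi^\uparrow,\dot\varphi^\downarrow$ are bounded and uniformly continuous, with $K^\uparrow\ge\underline K>0$. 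Splitting the first integral at $u=s-\delta$, the near-diagonal part is bounded using $\kappa\int_{s-\delta}^s e^{-\kappa\int_u^s K^\uparrow_r\,dr}\,du\le1/\underline K$ together with the fact that $|g^\uparrow_u-K^\uparrow_u g^\uparrow_s/K^\uparrow_s|$ is at most some random $\rho(\delta)\downarrow0$, uniformly in $s$, for $|u-s|\le\delta$; the far part is at most $2\,T\,\|g^\uparrow\|_{\infty,[0,T]}\,\kappa\,e^{-\kappa\underline K\delta}$, which tends to $0$. Hence the first term on the right tends to $0$ uniformly in $s\in[0,T]$ (send $\kappa\to\infty$, then $\delta\to0$). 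The boundary term $\frac{g^\uparrow_s}{K^\uparrow_s}e^{-\kappa\int_0^s K^\uparrow_r\,dr}$ is bounded by a random constant and vanishes for each $s>0$, but not uniformly near $s=0$; since also $|\Phi^\kappa_s|\le\|g^\uparrow\|_{\infty,[0,T]}/\underline K$ uniformly in $\kappa$, dominated convergence nevertheless yields $\sup_{t\le T}\bigl|\int_0^t\dot\varphi^\uparrow_s(\Phi^\kappa_s-g^\uparrow_s/K^\uparrow_s)\,ds\bigr|\to0$, i.e. $\kappa I^\uparrow_\kappa(\cdot)\to\int_0^\cdot\frac{g^\uparrow_s}{K^\uparrow_s}\dot\varphi^\uparrow_s\,ds$ uniformly on $[0,T]$ almost surely; the sell side is identical.

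It remains to identify the limit. Since $\varphi^\uparrow,\varphi^\downarrow$ are the minimal nondecreasing decomposition of the absolutely continuous $\varphi$, we have $\dot\varphi^\uparrow=(\dot\varphi)^+$ and $\dot\varphi^\downarrow=(\dot\varphi)^-$, hence $\dot\varphi^\uparrow_t\dot\varphi^\downarrow_t=0$ for every $t$; so the cross terms in $\dot\varphi^\uparrow_s g^\uparrow_s$ and $\dot\varphi^\downarrow_s g^\downarrow_s$ vanish, leaving $\frac{g^\uparrow_s}{K^\uparrow_s}\dot\varphi^\uparrow_s=\frac{1-\alpha^\uparrow_s}{K^\uparrow_s h^\uparrow_s}(\dot\varphi^\uparrow_s)^2$ and its sell-side analogue. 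Collecting terms gives
\[
X^{OW,\kappa}(\varphi)=x+\int_0^\cdot\varphi_s\,dS^\varphi_s-\int_0^\cdot\bigl(\varepsilon^\uparrow_s\dot\varphi^\uparrow_s+\varepsilon^\downarrow_s\dot\varphi^\downarrow_s\bigr)\,ds-\kappa^{-1}\int_0^\cdot\Bigl(\frac{1-\alpha^\uparrow_s}{K^\uparrow_s h^\uparrow_s}(\dot\varphi^\uparrow_s)^2+\frac{1-\alpha^\downarrow_s}{K^\downarrow_s h^\downarrow_s}(\dot\varphi^\downarrow_s)^2\Bigr)\,ds+o(\kappa^{-1}),
\]
with the remainder vanishing uniformly on compacts almost surely, hence in probability. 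Comparison with \eqref{eq:AC}--\eqref{eq:shift} under $\lambda^\uparrow=\frac{1-\alpha^\uparrow}{\kappa K^\uparrow h^\uparrow}$, $\lambda^\downarrow=\frac{1-\alpha^\downarrow}{\kappa K^\downarrow h^\downarrow}$, $\gamma^\uparrow=\frac{\alpha^\uparrow}{h^\uparrow}$, $\gamma^\downarrow=\frac{\alpha^\downarrow}{h^\downarrow}$ -- for which \eqref{eq:shift} reproduces \eqref{eq:level}, so $S^\varphi$ is the common reference price -- then identifies the non-remainder terms with $X^{AC,\kappa}(\varphi)$. The main obstacle is the convolution estimate of the second paragraph, specifically obtaining the limit uniformly in the outer variable $t$ despite $\Phi^\kappa$ not converging uniformly near $s=0$; this is dealt with by integrating against $\dot\varphi^\uparrow$ first and then invoking dominated convergence rather than a crude supremum bound.
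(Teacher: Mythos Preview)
Your proof is correct and follows essentially the same route as the paper: reduce to showing that $\kappa(\varepsilon^{\varphi,\uparrow}_s-\varepsilon^\uparrow_s)$ converges to $\frac{1-\alpha^\uparrow_s}{K^\uparrow_s h^\uparrow_s}\dot\varphi^\uparrow_s$ (on $\{\dot\varphi^\uparrow_s>0\}$) by a Laplace/approximate-identity analysis of the convolution in \eqref{eq:vepsexpl1} (split at distance $\delta$ from the endpoint, use a pathwise lower bound $K^\uparrow\ge\underline K$), and then pass to the outer time integral by dominated convergence. The only cosmetic differences are that the paper kills the cross term locally via the indicator $\mathbf{1}_{\{\dot\varphi^\uparrow_t>0\}}$ (noting $\dot\varphi^\downarrow\equiv0$ on a neighborhood) rather than carrying it through and invoking $\dot\varphi^\uparrow\dot\varphi^\downarrow=0$ at the end, and it sandwiches the exponent by $(K^\uparrow_t\pm\delta)(t-s)$ instead of using your exact kernel identity $\kappa\int_0^s e^{-\kappa\int_u^s K^\uparrow_r\,dr}K^\uparrow_u\,du=1-e^{-\kappa\int_0^s K^\uparrow_r\,dr}$.
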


To wit, the permanent price impact is independent of the resilience of the order book. In contrast, the investor's other trading costs split into two parts as the resilience grows. On the one hand, trades incur the exogenous baseline spreads. In addition, the investor is charged quadratic trading costs determined by i) the height of the order book, ii) its resilience, and iii) the asymmetry of the price impact on bid and ask spreads. This identifies the reduced-form trading cost of Almgren/Chriss in terms of observable properties of the order book.

\begin{proof}[Proof of Theorem \ref{thm:main}]
Fix $T>0$. By localization, we can assume without loss of generality that the processes $K^\uparrow$, $K^\downarrow$ are both bounded away from zero by some $\underline{K}>0$, and moreover, that the processes $\dot{\varphi}^\uparrow:=\dot{\varphi}^+$, $\dot{\varphi}^\downarrow:=\dot{\varphi}^-$ as well as $\frac{1-\alpha^{\uparrow}}{h^{\uparrow}}$, $\frac{1-\alpha^{\downarrow}}{h^\downarrow}$, $\frac{\alpha^\uparrow}{h^\uparrow}$, $\frac{\alpha^{\downarrow}}{h^\downarrow}$, $\varepsilon^\uparrow$, and $\varepsilon^\downarrow$ are all uniformly bounded by some constant $C \in (0,\infty)$. 

By Definition \ref{def:wealth}, Jensen's inequality, and the boundedness of $\dot{\varphi}^\uparrow,\dot{\varphi}^\downarrow$, it suffices to show 
\begin{equation}\label{eq:DCT}
\int_0^T \Big| \kappa (\varepsilon^{\varphi,\uparrow}_t-\varepsilon^\uparrow_t) -\frac{1-\alpha^\uparrow_t}{K^\uparrow_t h^\uparrow_t}\dot{\varphi}^\uparrow_t\Big| \mathbf{1}_{\{\dot{\varphi}^\uparrow_t>0\}} dt,\ \int_0^T \Big| \kappa (\varepsilon^{\varphi,\downarrow}_t-\varepsilon^\downarrow_t) -\frac{1-\alpha^\downarrow_t}{K^\downarrow_t h^\downarrow_t}\dot{\varphi}^\downarrow_t\Big|\mathbf{1}_{\{\dot{\varphi}^\downarrow_t>0\}}dt \longrightarrow 0, \quad \mbox{a.s.,} 
\end{equation}
as $\kappa \longrightarrow \infty$. The arguments to show this are the same for both integrals, so we focus on the first one. The explicit representation \eqref{eq:vepsexpl1}, combined with the boundedness of $\dot{\varphi}^\uparrow,\dot{\varphi}^\downarrow, \frac{1-\alpha^\uparrow}{h^\uparrow}, \frac{\alpha^\downarrow}{h^\downarrow}$, the lower bound $\underline{K}$ for $K$, and an elementary integration, implies
$$\left| \kappa (\varepsilon^{\varphi,\uparrow}_t-\varepsilon^\uparrow_t)- \frac{1-\alpha^\uparrow_t}{K^\uparrow_t h^\uparrow_t}\dot{\varphi}^\uparrow_t \right| \leq \frac{2C^2}{\underline{K}}.$$
By the dominated convergence theorem, it is thus enough to show that this integrand converges to zero pointwise for each $t \in (0,T]$. Let us therefore fix $\omega \in\Omega$ and $t \in [0,T]$. Continuity of all the involved processes implies that, for every $\delta>0$, there exists $\nu>0$ such that 
\begin{equation}\label{eq:est1}
| K^\uparrow_s - K^{\uparrow}_t| \leq \delta, \quad \mbox{for all $s \in [t-\nu,t]$},
\end{equation}
and
\begin{equation}\label{eq:est2}
\left| \frac{1-\alpha^\uparrow_s}{h^\uparrow_s}\dot{\varphi}^\uparrow_s - \frac{1-\alpha^\uparrow_t}{h^\uparrow_t}\dot{\varphi}^\uparrow_t \right| \leq \delta, \quad \mbox{for all $s \in [t-\nu,t]$.}
\end{equation}
Moreover, if $\dot{\varphi}_t>0$, the continuous trading rate is positive on $[t-\nu,t]$ (after reducing $\nu$ if necessary). Therefore, the explicit formula for the ask spread in \eqref{eq:vepsexpl1} combined with (\ref{eq:est1}-\ref{eq:est2}) gives
\begin{align*}
&\left| \kappa (\varepsilon^{\varphi,\uparrow}_t-\varepsilon^\uparrow_t) - \frac{1-\alpha^\uparrow_t}{K^\uparrow_t h^\uparrow_t}\dot{\varphi}^\uparrow_t \right| \mathbf{1}_{\{\dot{\varphi}^\uparrow_t>0\}} \\
&\qquad\leq \kappa C^2 \int_0^{t-\nu} e^{-\kappa \underline{K}(t-s)} ds + \left|\int_{t-\nu}^t \left(\kappa e^{-\kappa \int_s^t K^\uparrow_r dr} \frac{1-\alpha^\uparrow_s}{h^\uparrow_s}\dot{\varphi}^\uparrow_s \right)ds -\frac{1-\alpha^\uparrow_t}{K^\uparrow_t h^\uparrow_t}\dot{\varphi}^\uparrow_t \right|\\
&\qquad \leq \frac{C^2}{\underline{K}}e^{-\kappa\underline{K}\nu}+\left| \int_{t-\nu}^t \kappa e^{-\kappa (K^\uparrow_t-\delta)(t-s)}\left(\frac{1-\alpha_t^\uparrow}{h^\uparrow_t}\dot{\varphi}^\uparrow_t+\delta\right)ds -\frac{1-\alpha^\uparrow_t}{K^\uparrow_t h^\uparrow_t}\dot{\varphi}^\uparrow_t \right|\\
&\qquad \qquad \quad +\left| \int_{t-\nu}^t \kappa e^{-\kappa (K^\uparrow_t+\delta)(t-s)}\left(\frac{1-\alpha_t^\uparrow}{h^\uparrow_t}\dot{\varphi}^\uparrow_t-\delta\right)ds -\frac{1-\alpha^\uparrow_t}{K^\uparrow_t h^\uparrow_t}\dot{\varphi}^\uparrow_t \right|\\
&\qquad= \frac{C^2}{\underline{K}}e^{-\kappa\underline{K}\nu}+\left| \frac{\frac{1-\alpha_t^\uparrow}{h^\uparrow_t}\dot{\varphi}^\uparrow_t+\delta}{K^\uparrow_t-\delta}\left(1-e^{-\kappa(K^\uparrow_t-\delta)\nu}\right)-\frac{1-\alpha^\uparrow_t}{K^\uparrow_t h^\uparrow_t}\dot{\varphi}^\uparrow_t \right|\\
& \qquad \qquad \quad +\left| \frac{\frac{1-\alpha_t^\uparrow}{h^\uparrow_t}\dot{\varphi}^\uparrow_t-\delta}{K^\uparrow_t+\delta}\left(1-e^{-\kappa(K^\uparrow_t+\delta)\nu}\right)-\frac{1-\alpha^\uparrow_t}{K^\uparrow_t h^\uparrow_t}\dot{\varphi}^\uparrow_t \right|.
\end{align*}
Now, first choose $\delta$ sufficiently small and then $\kappa$ sufficiently large. This establishes the pointwise convergence needed to apply the dominated convergence theorem to obtain \eqref{eq:DCT} and thereby the assertion. 
\end{proof}

\begin{myrek}\label{rem:main}
In the context of utility maximization, for example, one may want to apply Theorem~\ref{thm:main} to a \emph{sequence} $(\varphi^\kappa)_{\kappa>0}$ of trading strategies that become more and more active as the market becomes more liquid for higher resilience $\kappa \longrightarrow \infty$. To this end, Theorem~\ref{thm:main} can be modified as follows: suppose
\begin{equation}\label{eq:rate}
d\varphi^\kappa_t = \kappa^{1/4} \dot{\phi}^\kappa_t dt,
\end{equation}
for continuous, adapted processes $\dot{\phi}^\kappa$, $\kappa>0$ for which the set $\{\dot{\phi}^\kappa_t: \kappa \in (0,\infty), t \in [0,T]\}$ is bounded in probability for any $T>0$,
i.e., the trading rate diverges at most at rate $\kappa^{1/4}$. 

Then, arguing verbatim as in the proof of Theorem~\ref{thm:main}, one obtains
\begin{align*}
X^{OW,\kappa}(\varphi^\kappa)  = x &+\int_0^\cdot \varphi^\kappa_t dS^{\varphi^\kappa}_t -\int_0^\cdot \left(\varepsilon^\uparrow_t \dot{\varphi}^{\kappa,\uparrow}_t + \varepsilon^\downarrow_t \dot{\varphi}^{\kappa,\downarrow}_t\right) dt\\
&-\kappa^{-1}\int_0^\cdot \left(\frac{1-\alpha^\uparrow_t}{K^\uparrow_t h^\uparrow_t} (\dot{\varphi}^{\kappa,\uparrow}_t)^2 +\frac{1-\alpha^\downarrow_t}{K^\downarrow_t h^\downarrow_t} (\dot{\varphi}^{\kappa,\downarrow}_t)^2\right) dt+o(\kappa^{-1/2}), 
\end{align*}
uniformly on compacts in probability, as $\kappa \longrightarrow \infty$.
\end{myrek}

In view of Remark \ref{rem:main}, Theorem \ref{thm:main} remains valid for families of strategies if the corresponding trading rates do not blow up too quickly. The quadratic costs levied on the trading rate in the limit are again determined by the asymmetry of the price impact, divided by the height of the order book and its resilience. Compared to Theorem \ref{thm:main}, only the rate of convergence changes.

\begin{myrek}\label{rem:main2}
The uniform convergence on compacts in probability in Theorem~\ref{thm:main} can be strengthened to uniform convergence on compacts in $L^p(P)$, $p\in[1,\infty)$ if the corresponding boundedness assumptions hold \emph{uniformly} rather than only \emph{locally}. Indeed, suppose $K^\uparrow$, $K^\downarrow$ are uniformly bounded away from zero and the processes $\dot{\varphi}$, $\frac{1-\alpha^{\uparrow}}{h^{\uparrow}}$, $\frac{1-\alpha^{\downarrow}}{h^\downarrow}$, $\frac{\alpha^\uparrow}{h^\uparrow}$, $\frac{\alpha^{\downarrow}}{h^\downarrow}$ as well as $\varepsilon^\uparrow$, $\varepsilon^\downarrow$ are all uniformly bounded. Then, the arguments in the proof of Theorem~\ref{thm:main} and another application of the dominated convergence theorem show the asserted uniform $L^p(P)$-convergence. Analogously, this also follows in the setting of Remark~\ref{rem:main}.
\end{myrek}

The limiting dynamics for absolutely continuous strategies in Theorem \ref{thm:main} suggest that block trades or other ``rougher'' strategies lead to unnecessary trading costs in this regime due to too rapid portfolio turnover. Indeed, approximating such strategies by smoothed versions as in \cite{kusuoka.95,bank.baum.04,cetin.al.04} allows for cheaper execution as the resilience grows, due to reduced price impact. This can be seen in the optimal execution paths of Obizhaeva and Wang \cite[Figure 4]{obizhaeva.wang.13}, where block trades can be gradually replaced by absolutely continuous trading rates in this case. The following lemma shows that this observation holds more generally:

\begin{mylemma}\label{lem:jump}
Fix $0<T'<T<\infty$ and consider a sequence of block trades:
$$\varphi_t= \sum_{n=0}^N \theta_n 1_{\auf \tau_n, T \zu}(t),$$
for strictly increasing stopping times $\tau_0<\tau_1<\ldots<\tau_N \leq T'$, and corresponding locally bounded block trades $\theta_n$ measurable with respect to $\scr{F}_{\tau_n}$. 

 Then, if $\varphi \neq 0$, there exists a sequence of absolutely continuous strategies $(\varphi^\kappa)_{\kappa>0}$ whose terminal payoffs dominate the one corresponding to $\varphi$ for sufficiently large resilience $\kappa$:
$${P\mbox{-}\lim}_{\kappa \to \infty} \left[X^{OW,\kappa}_T(\varphi^\kappa)-X^{OW,\kappa}_T(\varphi)\right] > 0.$$
\end{mylemma}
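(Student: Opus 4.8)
The plan is to replace each block $\theta_n\mathbf 1_{\auf\tau_n,T\zu}$ by an absolutely continuous strategy that builds up the same position $\theta_n$ linearly over a short interval $[\tau_n,\tau_n+\ell_\kappa]$, where $\ell_\kappa\downarrow 0$ is chosen so that the order book relaxes essentially instantaneously on this time scale; concretely $\ell_\kappa:=\kappa^{-1/2}$ (any $\ell_\kappa\to 0$ with $\kappa\ell_\kappa\to\infty$ works). Set $g^\kappa_n(t):=\big((t-\tau_n)/\ell_\kappa\big)^+\wedge 1$ and $\varphi^\kappa:=\sum_{n=0}^N\theta_n g^\kappa_n$; this is an adapted, absolutely continuous strategy with $\varphi^\kappa_T=\sum_n\theta_n=\varphi_T$. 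As in the proof of Theorem~\ref{thm:main} I localize so that $K^\uparrow,K^\downarrow\ge\underline K>0$ and $\varepsilon^\uparrow,\varepsilon^\downarrow,\alpha^\uparrow,\alpha^\downarrow,1/h^\uparrow,1/h^\downarrow$ and the $|\theta_n|$ are bounded by a constant $C$. On the event $G_\kappa:=\{\ell_\kappa<\min_{n<N}(\tau_{n+1}-\tau_n)\}$, which satisfies $P(G_\kappa)\to 1$ because the $\tau_n$ are strictly increasing and $\ell_\kappa\to 0$ (and $\tau_N+\ell_\kappa\le T$ eventually, as $\tau_N\le T'<T$), the ramp intervals are disjoint and contained in $[0,T]$, so $\varphi^\kappa_t=\varphi_{\tau_n-}+\theta_n(t-\tau_n)/\ell_\kappa$ on $[\tau_n,\tau_n+\ell_\kappa]$, $\varphi^\kappa_t=\varphi_{t-}$ elsewhere, and $\dot\varphi^{\kappa,\uparrow}_t=\theta_n^+/\ell_\kappa$, $\dot\varphi^{\kappa,\downarrow}_t=\theta_n^-/\ell_\kappa$ on the $n$-th ramp.

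Next I would pass to the limit in the two terminal wealths. For the block strategy, \eqref{eq:wealth} yields the explicit finite sum
\[
X^{OW,\kappa}_T(\varphi)=x+\int_0^T\varphi_{t-}\,dS_t+\sum_{n=0}^N\Big[\varphi_{\tau_n-}\big(\tfrac{\alpha^\uparrow_{\tau_n}}{h^\uparrow_{\tau_n}}\theta_n^+-\tfrac{\alpha^\downarrow_{\tau_n}}{h^\downarrow_{\tau_n}}\theta_n^-\big)-\varepsilon^{\varphi,\uparrow}_{\tau_n-}\theta_n^+-\varepsilon^{\varphi,\downarrow}_{\tau_n-}\theta_n^--\tfrac{1/2-\alpha^\uparrow_{\tau_n}}{h^\uparrow_{\tau_n}}(\theta_n^+)^2-\tfrac{1/2-\alpha^\downarrow_{\tau_n}}{h^\downarrow_{\tau_n}}(\theta_n^-)^2\Big],
\]
and by \eqref{eq:vepsexpl1}--\eqref{eq:vepsexpl2} the quantity $\varepsilon^{\varphi,\uparrow}_{\tau_n-}-\varepsilon^\uparrow_{\tau_n}$ is a sum over $m<n$ of terms carrying a factor $e^{-\kappa\int_{\tau_m}^{\tau_n}K^\uparrow_r\,dr}$ (analogously for $\varepsilon^{\varphi,\downarrow}_{\tau_n-}-\varepsilon^\downarrow_{\tau_n}$ with $K^\downarrow$), which vanishes a.s.\ since $\tau_m<\tau_n$; hence the right-hand side converges a.s.\ to the random variable $Y$ obtained by replacing $\varepsilon^{\varphi,\uparrow}_{\tau_n-},\varepsilon^{\varphi,\downarrow}_{\tau_n-}$ there with $\varepsilon^\uparrow_{\tau_n},\varepsilon^\downarrow_{\tau_n}$. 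For the smoothed strategy I would work on $G_\kappa$, where \eqref{eq:wealth} has no $[\varphi^{\kappa,\uparrow}],[\varphi^{\kappa,\downarrow}]$ terms ($\varphi^\kappa$ being absolutely continuous), and treat its three remaining pieces: (i) $\int_0^\cdot\varphi^\kappa_t\,dS_t\to\int_0^\cdot\varphi_{t-}\,dS_t$ uniformly on compacts in probability by the dominated convergence theorem for stochastic integrals \cite{protter.05}, since $\varphi^\kappa_t\to\varphi_{t-}$ boundedly for a.e.\ $t$; (ii) $\int_0^T\varphi^\kappa_t\big(\tfrac{\alpha^\uparrow_t}{h^\uparrow_t}\dot\varphi^{\kappa,\uparrow}_t-\tfrac{\alpha^\downarrow_t}{h^\downarrow_t}\dot\varphi^{\kappa,\downarrow}_t\big)dt=\sum_n\int_0^1(\varphi_{\tau_n-}+\theta_n u)\big(\tfrac{\alpha^\uparrow_{\tau_n+\ell_\kappa u}}{h^\uparrow_{\tau_n+\ell_\kappa u}}\theta_n^+-\tfrac{\alpha^\downarrow_{\tau_n+\ell_\kappa u}}{h^\downarrow_{\tau_n+\ell_\kappa u}}\theta_n^-\big)du$ after the substitution $u=(t-\tau_n)/\ell_\kappa$, which by continuity and $\int_0^1 u\,du=\tfrac12$ converges a.s.\ to $\sum_n\big(\tfrac{\alpha^\uparrow_{\tau_n}}{h^\uparrow_{\tau_n}}\theta_n^+-\tfrac{\alpha^\downarrow_{\tau_n}}{h^\downarrow_{\tau_n}}\theta_n^-\big)(\varphi_{\tau_n-}+\tfrac12\theta_n)$; and (iii) writing $\varepsilon^{\varphi^\kappa,\uparrow}=\varepsilon^\uparrow+r^\kappa$ via \eqref{eq:vepsexpl1}, the bound $0\le r^\kappa_t\le C\int_0^t e^{-\kappa\underline K(t-s)}|\dot\varphi^\kappa_s|\,ds\le CM/(\underline K\kappa\ell_\kappa)\to 0$ (with $|\dot\varphi^\kappa|\le M/\ell_\kappa$, $M:=\sum_n|\theta_n|$) shows that the spread cost $\int_0^T\big(\varepsilon^{\varphi^\kappa,\uparrow}_t\dot\varphi^{\kappa,\uparrow}_t+\varepsilon^{\varphi^\kappa,\downarrow}_t\dot\varphi^{\kappa,\downarrow}_t\big)dt$ has the same limit as $\int_0^T\big(\varepsilon^\uparrow_t\dot\varphi^{\kappa,\uparrow}_t+\varepsilon^\downarrow_t\dot\varphi^{\kappa,\downarrow}_t\big)dt\to\sum_n(\varepsilon^\uparrow_{\tau_n}\theta_n^++\varepsilon^\downarrow_{\tau_n}\theta_n^-)$, the quadratic cost being absorbed here and disappearing. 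Since $P(G_\kappa)\to 1$, these combine to $X^{OW,\kappa}_T(\varphi^\kappa)\to Z$ in probability, where $Z=x+\int_0^T\varphi_{t-}\,dS_t+\sum_n\big[\big(\tfrac{\alpha^\uparrow_{\tau_n}}{h^\uparrow_{\tau_n}}\theta_n^+-\tfrac{\alpha^\downarrow_{\tau_n}}{h^\downarrow_{\tau_n}}\theta_n^-\big)(\varphi_{\tau_n-}+\tfrac12\theta_n)-\varepsilon^\uparrow_{\tau_n}\theta_n^+-\varepsilon^\downarrow_{\tau_n}\theta_n^-\big]$.

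Subtracting and using $\theta_n\theta_n^+=(\theta_n^+)^2$, $\theta_n\theta_n^-=-(\theta_n^-)^2$, together with $\tfrac12\tfrac{\alpha^\uparrow}{h^\uparrow}+\tfrac{1/2-\alpha^\uparrow}{h^\uparrow}=\tfrac{1-\alpha^\uparrow}{2h^\uparrow}$ and its downward analogue, I obtain
\[
X^{OW,\kappa}_T(\varphi^\kappa)-X^{OW,\kappa}_T(\varphi)\ \longrightarrow\ Z-Y=\sum_{n=0}^N\Big(\frac{1-\alpha^\uparrow_{\tau_n}}{2h^\uparrow_{\tau_n}}(\theta_n^+)^2+\frac{1-\alpha^\downarrow_{\tau_n}}{2h^\downarrow_{\tau_n}}(\theta_n^-)^2\Big)\qquad\text{in probability.}
\]
Since $\alpha^\uparrow,\alpha^\downarrow\le 1/2$ and $h^\uparrow,h^\downarrow>0$, this limiting random variable is nonnegative and strictly positive wherever $\varphi\neq0$, which is the assertion; undoing the localization is routine, as the localized and original quantities agree on $\{\rho_k>T\}$ with $P(\rho_k>T)\to 1$.

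The hard part is pinning down $\ell_\kappa$: it must vanish, so that the smoothed strategy reproduces the block's holding profile, permanent impact, and proportional costs in the limit, yet satisfy $\kappa\ell_\kappa\to\infty$, so that the spread inflation $r^\kappa$ caused by recent trading relaxes away. This forces the trading rate to order $\kappa^{1/2}$, putting it beyond the reach of Theorem~\ref{thm:main} and Remark~\ref{rem:main} (whose rate threshold is $\kappa^{1/4}$ and which moreover require continuous turnover rates), so step (iii) must be done by hand; fortunately the only estimate that genuinely has to be reworked, the control of $r^\kappa$, reduces to the elementary $O\big((\kappa\ell_\kappa)^{-1}\big)$ bound above rather than the delicate pointwise analysis underlying Theorem~\ref{thm:main}. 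Everything else is bookkeeping of the $\theta_n^{\pm}$ decomposition and the asymptotically negligible overlap event $G_\kappa^{c}$.
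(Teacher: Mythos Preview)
Your proof is correct and uses the same linear-interpolation idea as the paper, but the execution differs in an instructive way. The paper chooses $\ell_\kappa=\kappa^{-1/4}$ so that the trading rate is $O(\kappa^{1/4})$ and then invokes Remark~\ref{rem:main} to control $X^{OW,\kappa}_T(\varphi^\kappa)$; it then bounds the difference $X^{OW,\kappa}_T(\varphi^\kappa)-X^{OW,\kappa}_T(\varphi)$ from below by exploiting the nonnegativity of $\varepsilon^{\varphi,\uparrow}-\varepsilon^\uparrow$, $\varepsilon^{\varphi,\downarrow}-\varepsilon^\downarrow$, and checks that the surviving terms converge to something strictly positive. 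You instead compute both terminal wealths to explicit limits $Y$ and $Z$ and subtract, obtaining the closed-form gain
\[
Z-Y=\sum_{n=0}^N\Big(\frac{1-\alpha^\uparrow_{\tau_n}}{2h^\uparrow_{\tau_n}}(\theta_n^+)^2+\frac{1-\alpha^\downarrow_{\tau_n}}{2h^\downarrow_{\tau_n}}(\theta_n^-)^2\Big),
\]
which is sharper information than the paper extracts. Your direct bound $r^\kappa_t\le CM/(\underline K\kappa\ell_\kappa)$ on the spread inflation sidesteps Remark~\ref{rem:main} entirely; this is a genuine advantage, since the linear interpolator has piecewise-constant (hence discontinuous) turnover rate and therefore does not literally satisfy the continuity hypothesis of Theorem~\ref{thm:main}/Remark~\ref{rem:main} that the paper invokes.

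One small overstatement: in your last paragraph you write that the constraint $\kappa\ell_\kappa\to\infty$ ``forces the trading rate to order $\kappa^{1/2}$''. It does not --- as you yourself note at the outset, any $\ell_\kappa\to 0$ with $\kappa\ell_\kappa\to\infty$ works, so $\ell_\kappa=\kappa^{-\beta}$ for any $\beta\in(0,1)$ is fine; in particular the paper's choice $\beta=1/4$ (trading rate $\kappa^{1/4}$) is compatible with your argument. The reason you cannot simply quote Remark~\ref{rem:main} is the continuity requirement on $\dot\phi^\kappa$, not the magnitude of the rate.
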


\begin{proof}
Approximate the simple strategy $\varphi$ by linear interpolation, with trading rates of order $O(\kappa^{1/4})$ applied over time intervals of length $O(\kappa^{-1/4})$. Then the corresponding strategies $\varphi^\kappa$ have total variation of order $O(1)$ and converge pointwise to $\varphi$ as $\kappa \longrightarrow \infty$. Moreover, by Definition \ref{def:wealth} and Remark~\ref{rem:main}, we have
\begin{equation}\label{eq:chain}
\begin{split}
X^{OW,\kappa}_T(\varphi^\kappa)-X^{OW,\kappa}_T(\varphi) \leq &\int_0^T (\varphi^\kappa_t -\varphi_t) dS_t\\
&\quad +\int_0^T \frac{\alpha^\uparrow_t}{h^\uparrow_t} \varphi^{\kappa,\uparrow}_t d\varphi^{\kappa,\uparrow}_t -\int_0^T \frac{\alpha^\uparrow_t}{h^\uparrow_t} \varphi^{\uparrow}_t d\varphi^{\uparrow}_t + \int_0^T \frac{1/2-\alpha^\uparrow_t}{h^\uparrow_t} d[\varphi^\uparrow]_t\\
&\quad -\int_0^T \frac{\alpha^\downarrow_t}{h^\uparrow_t} \varphi^{\kappa,\downarrow}_t d\varphi^{\kappa,\downarrow}_t +\int_0^T \frac{\alpha^\downarrow_t}{h^\downarrow_t} \varphi^{\downarrow}_t d\varphi^{\downarrow}_t + \int_0^T \frac{1/2-\alpha^\downarrow_t}{h^\downarrow_t} d[\varphi^\downarrow]_t\\
&\quad-\int_0^T \varepsilon^{\uparrow}_t d(\varphi^{\kappa,\uparrow}_t-\varphi^\uparrow_t)-\int_0^T \varepsilon^{\downarrow}_t d(\varphi^{\kappa,\downarrow}_t-\varphi^\downarrow_t)+O(\kappa^{-1/2}),
\end{split}
\end{equation}
in probability as $\kappa \longrightarrow \infty$, because $\varepsilon^{\varphi,\uparrow}-\varepsilon^\uparrow$, $\varepsilon^{\varphi,\downarrow}-\varepsilon^\downarrow$ are nonnegative. By localization, we can assume $\varphi$, all $\varphi^\kappa$, as well as $\alpha^\uparrow/h^\uparrow$, $\alpha^\downarrow/h^\downarrow$, $\varepsilon^\uparrow$, $\varepsilon^\downarrow$ are uniformly bounded. The first term on the right-hand side of \eqref{eq:chain} then tends to zero in probability by the dominated convergence theorem for stochastic integrals. 

Next, using the continuity of $\alpha^\uparrow/h^\uparrow$, one readily verifies that, by definition of the \emph{linear} interpolators $\varphi^\kappa$:
$$
\int_0^T \frac{\alpha^\uparrow_t}{h^\uparrow_t} \varphi^{\kappa,\uparrow}_t d\varphi^{\kappa,\uparrow}_t \longrightarrow \int_0^T \frac{\alpha^\uparrow_t}{h^\uparrow_t} \varphi^\uparrow_{t-} d\varphi^\uparrow_t +\int_0^T \frac{\alpha^\uparrow_t}{2h^\uparrow_t} d[\varphi^\uparrow]_t \quad \mbox{a.s., as $\kappa \longrightarrow \infty$.}
$$
Since $\alpha^\uparrow \in [0,1/2]$, this implies that the limit of the second line on the right-hand side of \eqref{eq:chain} is nonnegative, and strictly positive if there is at least one nontrivial purchase. 

Likewise, the third line on the right-hand side of \eqref{eq:chain} converges to a nonnegative limit as well, which is strictly positive if there is at least one nontrivial sale.

Finally, using the right-continuity of the bounded integrands and taking into account the definition of the $\varphi^\kappa$, it follows that the terms in the last line on the right-hand side of \eqref{eq:chain} tend to zero. Combining all of these estimates, the assertion follows.
\end{proof}

\section{Implications for Portfolio Choice}\label{sec:PC}

In this section, we apply the above limit theorem to reduce portfolio choice in the Obizhaeva/Wang-type models of Section~\ref{sec:trading} to the corresponding problem in the Almgren/Chriss setup from Section~\ref{sec:AC}. More specifically, we argue that the optimal trading strategies determined by Moreau et al.~\cite{moreau.al.14} in an Almgren/Chriss model with small temporary trading costs are also optimal in corresponding Obizhaeva/Wang models with high resilience.

To this end, suppose there is no exogenous baseline spread ($\varepsilon^\uparrow=\varepsilon^\downarrow=0$), and also no permanent price impact ($\alpha^\uparrow=\alpha^\downarrow=0$) so that the price process equals the fundamental value ($S^\varphi=S$). Moreover, assume that the order book is symmetric ($K^\uparrow=K^\downarrow=:K$ as well as $h^\uparrow=h^\downarrow=:h$). In summary, the wealth dynamics in the approximating Almgren/Chriss-model from Theorem~\ref{thm:main} then read as follows:
$$dX^{AC,\kappa}_t(\varphi)=\varphi_t dS_t -\frac{1}{\kappa K h}\dot{\varphi}^2_tdt.$$

Consider an agent with utility function $U: \mathbb{R} \to \mathbb{R}$ maximizing expected utility from terminal wealth at some planning horizon $T>0$.\footnote{Here, the utility function $U$ is strictly increasing, strictly concave, and twice continuously differentiable. Admissibility of corresponding trading strategies can be defined as in Biagini and {\v{C}}ern{\'y} \cite{biagini.cerny.11}, for example. Throughout, we assume that the frictionless problem is well-posed in that a unique optimal strategy $\varphi^\infty$ exists.} Write $\varphi^\infty$ for the frictionless optimizer and denote by $R$ the frictionless investor's risk-tolerance process,\footnote{That is, the risk tolerance $-u'/u''$ of the indirect utility $u$ as a function of current wealth.} which we assume to be positive, continuous, and adapted (e.g., constant for exponential utilities, see \cite{kallsen.muhlekarbe.13} for more details and intuitions). 

Given that the quadratic trading cost $\lambda=1/\kappa K h$ from Theorem~\ref{thm:main} as well as the drift and diffusion coefficients of the fundamental price process $dS_t=\mu^S_t dt+\sigma^S_t dW_t$ are sufficiently regular deterministic functions of time $t$, the current price $S_t$, and some auxiliary state variable $Y_t$ following an autonomous diffusion, \cite[Theorem 4.7]{moreau.al.14} shows that the family
 \begin{equation}\label{eq:strategy}
 d\varphi^\kappa_t= \sqrt{\frac{  \kappa K_t h_t (\sigma^S_t)^2}{2R_t}}(\varphi^\infty_t-\varphi^\kappa_t) dt, \quad \varphi^\kappa_0=\varphi^\infty_0,
 \end{equation}
is optimal \emph{at the leading order $O(\kappa^{-1/2})$}. That is, it dominates all competing families $(\vartheta^\kappa)_{\kappa>0}$ up to terms of higher order $o(\kappa^{-1/2})$:
$$E\left[U\left(X^{AC,\kappa}_T(\vartheta^\kappa)\right)\right] \leq E\left[U\left(X^{AC,\kappa}_T(\varphi^\kappa)\right)\right]+o(\kappa^{-1/2}), \quad \mbox{as } \kappa \longrightarrow \infty.$$
Here, $O(\kappa^{-1/2})$ indeed is the ``leading-order'' correction due to small price impact, because $(\varphi^\kappa)_{\kappa>0}$ achieves the optimal frictionless performance up to a term of this order \cite[Theorem 4.3]{moreau.al.14}. 

We now show that the same family $(\varphi^\kappa)_{\kappa>0}$ is also optimal in highly resilient Obizhaeva/Wang models, among competing strategies that track a sufficiently regular frictionless target with a trading rate that is at most of order $O(\kappa^{1/4})$.\footnote{Families with higher trading rates lead to trading costs of order greater than $O(\kappa^{-1/2})$ in the limiting Almgren/Chriss model, and are therefore dominated by $(\varphi^\kappa)_{\kappa>0}$ even without taking the displacement from the frictionless optimizer into account \cite[Theorem 4.3]{moreau.al.14}. Since block trades are also asymptotically suboptimal (cf.\ Lemma \ref{lem:jump}), this suggests that one can restrict to strategies of this type without losing utility at the first asymptotic order. A rigorous proof is beyond our scope here.} To this end, we first establish some properties of strategies of this type:

\begin{mylemma}\label{lem:strat}
Consider the following family of strategies $(\vartheta^\kappa)_{\kappa>0}$:
 \begin{equation}\label{eq:strategy2}
 d\vartheta^\kappa_t= \kappa^{1/2} M_t (\vartheta^\infty_t-\vartheta^\kappa_t) dt, \quad \vartheta^\kappa_0=\vartheta^\infty_0,
 \end{equation}
 for some continuous, positive, adapted trading rate $M$ and a frictionless target strategy following an It\^o process, $d\vartheta^\infty_t=\mu^{\vartheta^\infty}_tdt+\sigma^{\vartheta^\infty}_t dW_t$ with continuous drift and diffusion coefficients $\mu^{\vartheta^\infty}$, $\sigma^{\vartheta^\infty}$. 
 
 Then, the family of strategies \eqref{eq:strategy2} satisfies \eqref{eq:rate} on $[0,T]$. Moreover, 
 \begin{equation}\label{eq:DC}
 X^{AC,\kappa}(\vartheta^\kappa) \longrightarrow x+\int_0^\cdot \vartheta^\infty_t dS_t, \quad \mbox{as $\kappa \longrightarrow \infty$, uniformly in probability.}
 \end{equation}
 If the processes $\mu^{\vartheta^\infty}$, $\sigma^{\vartheta^\infty}$, and $M$ are uniformly bounded and $M$ is uniformly bounded away from zero, then \eqref{eq:rate} and the convergence in \eqref{eq:DC} hold uniformly in $L^2(P)$.
\end{mylemma}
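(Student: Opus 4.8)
The plan is to first solve the linear SDE \eqref{eq:strategy2} explicitly, using the variation-of-constants formula (as in \cite[Theorem V.52]{protter.05}), which gives
$$\vartheta^\kappa_t = \vartheta^\infty_0 e^{-\kappa^{1/2}\int_0^t M_s\,ds} + \int_0^t \kappa^{1/2} M_s e^{-\kappa^{1/2}\int_s^t M_r\,dr}\,\vartheta^\infty_s\,ds.$$
Since $\kappa^{1/2}M_s e^{-\kappa^{1/2}\int_s^t M_r\,dr}\,ds$ is a probability kernel on $[0,t]$ concentrating at $s=t$ as $\kappa\to\infty$, continuity of $\vartheta^\infty$ gives $\vartheta^\kappa_t\to\vartheta^\infty_t$ pointwise (uniformly on $[0,T]$ by a standard compactness/modulus-of-continuity argument). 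The key quantity, however, is the trading rate $\dot\vartheta^\kappa_t = \kappa^{1/2}M_t(\vartheta^\infty_t-\vartheta^\kappa_t)$; I would show $\vartheta^\infty_t-\vartheta^\kappa_t = O_p(\kappa^{-1/4})$, so that $\dot\vartheta^\kappa_t = O_p(\kappa^{1/4})$, giving \eqref{eq:rate} with $\dot\phi^\kappa_t = \kappa^{-1/4}\dot\vartheta^\kappa_t$ bounded in probability. To get the rate on $\vartheta^\infty-\vartheta^\kappa$, apply It\^o's formula (integration by parts) to $e^{\kappa^{1/2}\int_0^t M_s ds}(\vartheta^\infty_t-\vartheta^\kappa_t)$, which after rearranging yields
$$\vartheta^\infty_t-\vartheta^\kappa_t = \int_0^t e^{-\kappa^{1/2}\int_s^t M_r dr}\big(\mu^{\vartheta^\infty}_s\,ds + \sigma^{\vartheta^\infty}_s\,dW_s\big).$$
The drift part is $O(\kappa^{-1/2})$ by the same elementary integration used in the proof of Theorem~\ref{thm:main}; the stochastic-integral part has conditional $L^2$-norm $\int_0^t e^{-2\kappa^{1/2}\int_s^t M_r dr}(\sigma^{\vartheta^\infty}_s)^2\,ds = O(\kappa^{-1/2})$ by the same kernel estimate, hence is $O_p(\kappa^{-1/4})$ by the Burkholder--Davis--Gundy inequality (or Chebyshev). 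This dominates, so $\vartheta^\infty-\vartheta^\kappa = O_p(\kappa^{-1/4})$ uniformly on $[0,T]$, establishing \eqref{eq:rate}.

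For \eqref{eq:DC}, write $X^{AC,\kappa}_t(\vartheta^\kappa) = x + \int_0^t \vartheta^\kappa_s\,dS_s - \kappa^{-1}\int_0^t \frac{1}{K_s h_s}(\dot\vartheta^\kappa_s)^2\,ds$ (in the symmetric, frictionless reference setting of Section~\ref{sec:PC}; in general one keeps the $\varepsilon$ and $\gamma$ terms, which are handled identically). The quadratic-cost term is $\kappa^{-1}\int_0^t \frac{\kappa M_s^2}{K_s h_s}(\vartheta^\infty_s-\vartheta^\kappa_s)^2\,ds = \int_0^t \frac{M_s^2}{K_s h_s}(\vartheta^\infty_s-\vartheta^\kappa_s)^2\,ds$, and since $\vartheta^\infty-\vartheta^\kappa = O_p(\kappa^{-1/4})$ pointwise with a uniform domination bound (the explicit formula bounds $|\vartheta^\infty_s-\vartheta^\kappa_s|$ by a constant on $[0,T]$), dominated convergence gives that this term tends to zero uniformly in probability. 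For the stochastic-integral term, $\int_0^\cdot(\vartheta^\kappa_s-\vartheta^\infty_s)\,dS_s\to 0$ uniformly in probability by the dominated convergence theorem for stochastic integrals (using $\vartheta^\kappa\to\vartheta^\infty$ pointwise and the uniform bound). Combining the two yields \eqref{eq:DC}.

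For the final $L^2(P)$ claim, under the additional uniform boundedness of $\mu^{\vartheta^\infty},\sigma^{\vartheta^\infty},M$ and uniform positivity of $M$, the explicit formula for $\vartheta^\infty-\vartheta^\kappa$ gives a deterministic bound on its conditional second moment of order $\kappa^{-1/2}$; taking expectations and using the BDG inequality upgrades the pointwise $L^2$ estimate, and an application of the dominated convergence theorem (with the uniform bounds providing the $L^2$-integrable dominating function) promotes all the above convergences to hold in $L^2(P)$ uniformly on $[0,T]$; the quadratic-cost term is handled the same way since it is dominated by a constant times $\sup_{s\le T}(\vartheta^\infty_s-\vartheta^\kappa_s)^2$, which converges to zero in $L^1(P)$. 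The main obstacle is the rate bookkeeping in the first paragraph: one must be careful that the trading rate $\dot\vartheta^\kappa$ genuinely does not exceed order $\kappa^{1/4}$, i.e.\ that the $O_p(\kappa^{-1/4})$ estimate on the displacement $\vartheta^\infty-\vartheta^\kappa$ is sharp enough and uniform enough in $t$ to feed into \eqref{eq:rate}; everything else is a routine application of the kernel estimate already used in the proof of Theorem~\ref{thm:main} together with dominated convergence.
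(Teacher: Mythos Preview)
Your plan is correct and follows essentially the same route as the paper: derive the representation $\vartheta^\infty_t-\vartheta^\kappa_t=\int_0^t e^{-\kappa^{1/2}\int_s^t M_r\,dr}\,d\vartheta^\infty_s$ (the paper cites \cite[Theorem V.52]{protter.05} directly rather than integrating by parts), split into drift and martingale parts, bound the drift by the kernel estimate and the martingale part in $L^2$ (the paper uses Doob's maximal inequality plus the It\^o isometry where you invoke BDG), then localize so that all coefficients are bounded and $M$ is bounded below, and finish with dominated convergence. Your opening kernel-concentration argument for $\vartheta^\kappa\to\vartheta^\infty$ is redundant once you have the $O_p(\kappa^{-1/4})$ bound on the difference, and the paper omits it; also, the ``uniform domination bound'' you appeal to for the DCT step should be stated as the $L^2$ sup-bound $E[\sup_{t\le T}\kappa^{1/2}|\vartheta^\infty_t-\vartheta^\kappa_t|^2]\le C$ rather than a pathwise constant, since the stochastic integral is not pathwise bounded uniformly in $\kappa$.
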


\begin{proof}
By localization, we can assume that the continuous processes $M$, $\mu^{\vartheta^\infty}$, and $\sigma^{\vartheta^\infty}$ are bounded by some constant $C>0$ and $M$ is bounded away from zero by some constant $\underline{M}>0$. By definition of $\vartheta^\kappa$, it therefore suffices to show that $\{\kappa^{1/4}|\vartheta^\infty_t -\vartheta^\kappa_t|,\kappa\in(0,\infty), t \in [0,T]\}$ is bounded in $L^2(P)$ to establish the boundedness in probability of $\kappa^{-1/4}d\vartheta^\kappa/dt$ asserted in \eqref{eq:rate}. To see this, first notice that \cite[Theorem V.52]{protter.05} implies
$$|\vartheta^\infty_t-\vartheta^\kappa_t| = \left|\int_0^t e^{-\kappa^{1/2} \int_s^t M_r dr} d\vartheta^\infty_s\right|.$$
Hence, the algebraic inequality $(a+b)^2 \leq 2a^2+2b^2$, Jensen's inequality, Doob's maximal inequality, and the It\^o isometry yield
\begin{align*}
E\left[\sup_{t \in [0,T]} \kappa^{1/2} |\vartheta^\infty_t-\vartheta^\kappa_t|^2 \right] &\leq 2 \kappa^{1/2} C^2 T \int_0^T e^{-2\kappa^{1/2}\underline{M}(T-s)}ds
 +8 \kappa^{1/2} C^2 T \int_0^T e^{-2\kappa^{1/2}\underline{M}(T-s)}ds \\
&\leq \frac{5C^2 T}{\underline{M}}.
\end{align*}
As a result, the family $(\vartheta^\kappa)_{\kappa>0}$ satisfies \eqref{eq:rate}. Together with the dominated convergence theorem, this immediately yields the second part of the assertion. 

If all involved processes are uniformly bounded, no localization is necessary and the same arguments show uniform convergence in $L^2(P)$.
\end{proof}

We can now prove that the strategy \eqref{eq:strategy} of \cite{moreau.al.14} is not only optimal in the Almgren/Chriss model with small temporary price impact, but also in the corresponding Obizhaeva/Wang model with large resilience:

\begin{mythm}\label{thm:portfolio}
Suppose the family of strategies $(\varphi^\kappa)_{\kappa>0}$ from \eqref{eq:strategy} is optimal at the leading order $O(\kappa^{-1/2})$ in the Almgren/Chriss-model.\footnote{See \cite{moreau.al.14} for sufficient conditions. In particular, this holds if risk tolerance $R$ is constant, the drift and diffusion coefficients $\mu^S(y)$, $\sigma^S(y)$, $\mu^{Y}(y)$, $\sigma^{Y}(y)$ of the fundamental price process $S$ and the state variable $Y$ are all bounded and smooth with bounded, smooth derivatives of all orders, and the volatilities $\sigma^S(y), \sigma^Y(y)$ are bounded away from zero \cite[Theorem 8.1]{moreau.al.14}.}

Moreover, assume that the utility function $U$ has bounded absolute risk aversion $-U''/U'$, and that the classes
\begin{equation}\label{eq:UI}
\left\{ \kappa^{1/2} U'\left[X^{AC,\kappa}_T(\varphi^\kappa)\right]\left[X^{OW,\kappa}_T(\varphi^\kappa)-X^{AC,\kappa}_T(\varphi^\kappa)\right]: \kappa>0\right\}
\end{equation}
as well as
\begin{equation}\label{eq:UI2}
\left\{ \kappa^{1/2} \left(U'\left[X^{OW,\kappa}_T(\varphi^\kappa)\right]+U'\left[X^{AC,\kappa}_T(\varphi^\kappa)\right]\right) \left[X^{OW,\kappa}_T(\varphi^\kappa)-X^{AC,\kappa}_T(\varphi^\kappa)\right]^2: \kappa>0\right\}
\end{equation}
are uniformly integrable.

Then, the family $(\varphi^\kappa)_{\kappa>0}$ is also optimal at the leading order $O(\kappa^{-1/2})$ in the Obizhaeva/Wang model, among all families of strategies $(\vartheta^\kappa)_{\kappa>0}$ of the form \eqref{eq:strategy2} which satisfy \eqref{eq:UI}.
\end{mythm}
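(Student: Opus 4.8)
The plan is to compare the Obizhaeva/Wang utility of the candidate family $(\varphi^\kappa)$ with its Almgren/Chriss counterpart, and likewise for any competitor $(\vartheta^\kappa)$ of the form \eqref{eq:strategy2}, so that the known optimality at order $O(\kappa^{-1/2})$ in the Almgren/Chriss model carries over. The starting point is Remark~\ref{rem:main} (more precisely the setting $\varepsilon^\uparrow=\varepsilon^\downarrow=0$, $\alpha^\uparrow=\alpha^\downarrow=0$, $K^\uparrow=K^\downarrow=K$, $h^\uparrow=h^\downarrow=h$ specialised from Section~\ref{sec:PC}): for any family of strategies satisfying \eqref{eq:rate} one has
\[
X^{OW,\kappa}_T(\cdot) = X^{AC,\kappa}_T(\cdot) + o(\kappa^{-1/2}) \quad \text{in probability,}
\]
and by Lemma~\ref{lem:strat} both $(\varphi^\kappa)$ (which is of the form \eqref{eq:strategy2} with $M_t=\sqrt{K_t h_t(\sigma^S_t)^2/(2R_t)}$) and any admissible competitor $(\vartheta^\kappa)$ satisfy \eqref{eq:rate}. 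So the wealth processes differ by $o(\kappa^{-1/2})$; the task is to upgrade this to a statement about \emph{expected utilities}, i.e.\ to show $E[U(X^{OW,\kappa}_T)] = E[U(X^{AC,\kappa}_T)] + o(\kappa^{-1/2})$ for both families.

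The key step is a second-order Taylor expansion of $U$ around $X^{AC,\kappa}_T(\varphi^\kappa)$. Writing $\Delta^\kappa:=X^{OW,\kappa}_T(\varphi^\kappa)-X^{AC,\kappa}_T(\varphi^\kappa)$, Taylor's theorem with integral remainder gives
\[
U\bigl(X^{OW,\kappa}_T(\varphi^\kappa)\bigr) - U\bigl(X^{AC,\kappa}_T(\varphi^\kappa)\bigr)
= U'\bigl(X^{AC,\kappa}_T(\varphi^\kappa)\bigr)\,\Delta^\kappa + \tfrac12 U''(\xi^\kappa)\,(\Delta^\kappa)^2
\]
for an intermediate point $\xi^\kappa$. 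Multiplying by $\kappa^{1/2}$, the first term is exactly the random variable appearing in the uniform-integrability hypothesis \eqref{eq:UI}, and since $\Delta^\kappa=o(\kappa^{-1/2})$ in probability it converges to $0$ in probability; uniform integrability then upgrades this to convergence in $L^1$, so $\kappa^{1/2}E[U'(X^{AC,\kappa}_T(\varphi^\kappa))\Delta^\kappa]\to 0$. For the remainder term, bounded absolute risk aversion $|U''|\le c\,U'$ together with $\xi^\kappa$ lying between the two wealth levels lets us bound $|U''(\xi^\kappa)|$ by a constant times $U'(X^{OW,\kappa}_T(\varphi^\kappa)) + U'(X^{AC,\kappa}_T(\varphi^\kappa))$ (using monotonicity of $U'$), so $\kappa^{1/2}|U''(\xi^\kappa)|(\Delta^\kappa)^2$ is dominated by a constant times the random variable in \eqref{eq:UI2}; again it tends to $0$ in probability (as $\kappa^{1/2}(\Delta^\kappa)^2 = \kappa^{1/2}\cdot o(\kappa^{-1})\to0$) and uniform integrability promotes this to $L^1$-convergence. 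Hence $\kappa^{1/2}\bigl(E[U(X^{OW,\kappa}_T(\varphi^\kappa))] - E[U(X^{AC,\kappa}_T(\varphi^\kappa))]\bigr)\to 0$, i.e.\ the two expected utilities agree up to $o(\kappa^{-1/2})$. The identical argument, applied to any competitor $(\vartheta^\kappa)$ of the form \eqref{eq:strategy2} satisfying \eqref{eq:UI}, gives $E[U(X^{OW,\kappa}_T(\vartheta^\kappa))] = E[U(X^{AC,\kappa}_T(\vartheta^\kappa))] + o(\kappa^{-1/2})$.

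The conclusion then follows by chaining three estimates: for any competitor $(\vartheta^\kappa)$ as above,
\[
E\bigl[U(X^{OW,\kappa}_T(\vartheta^\kappa))\bigr]
= E\bigl[U(X^{AC,\kappa}_T(\vartheta^\kappa))\bigr] + o(\kappa^{-1/2})
\le E\bigl[U(X^{AC,\kappa}_T(\varphi^\kappa))\bigr] + o(\kappa^{-1/2})
= E\bigl[U(X^{OW,\kappa}_T(\varphi^\kappa))\bigr] + o(\kappa^{-1/2}),
\]
where the middle inequality is the assumed leading-order optimality of $(\varphi^\kappa)$ in the Almgren/Chriss model. This is precisely the claimed leading-order optimality of $(\varphi^\kappa)$ in the Obizhaeva/Wang model.

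I expect the main obstacle to be the control of the Taylor remainder term: one has to turn the crude pointwise estimate $\Delta^\kappa = o(\kappa^{-1/2})$ in probability into an $L^1$ control of $\kappa^{1/2}U''(\xi^\kappa)(\Delta^\kappa)^2$, which is exactly why the somewhat technical uniform-integrability hypothesis \eqref{eq:UI2} and the bounded-risk-aversion assumption are imposed — they are what make the intermediate-point argument go through without additional integrability of the wealth processes themselves. A minor subtlety worth spelling out is that $(\varphi^\kappa)$ from \eqref{eq:strategy} is indeed a special case of \eqref{eq:strategy2} (with $\vartheta^\infty=\varphi^\infty$, which is an It\^o process by the regularity assumptions, and $M$ as above continuous, positive, adapted), so that Lemma~\ref{lem:strat} applies to it and \eqref{eq:rate} holds; one should also note that the competitor class is, by hypothesis, restricted to those $(\vartheta^\kappa)$ satisfying \eqref{eq:UI}, which is what licenses applying the expansion argument uniformly.
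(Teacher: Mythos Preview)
Your overall strategy matches the paper's: transfer optimality from the Almgren/Chriss model to the Obizhaeva/Wang model by controlling the difference $\Delta^\kappa = X^{OW,\kappa}_T - X^{AC,\kappa}_T$ via Remark~\ref{rem:main} and Lemma~\ref{lem:strat}, and use uniform integrability to pass from convergence in probability to $L^1$. The Taylor-expansion treatment of the candidate $(\varphi^\kappa)$, including the bound $|U''(\xi^\kappa)| \le c\bigl(U'(X^{OW,\kappa}_T(\varphi^\kappa))+U'(X^{AC,\kappa}_T(\varphi^\kappa))\bigr)$ via bounded risk aversion and monotonicity of $U'$, is exactly what the paper does.

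There is, however, a gap in your treatment of competitors. You claim that ``the identical argument'' yields the \emph{equality}
\[
E\bigl[U(X^{OW,\kappa}_T(\vartheta^\kappa))\bigr] = E\bigl[U(X^{AC,\kappa}_T(\vartheta^\kappa))\bigr] + o(\kappa^{-1/2})
\]
for every competitor $(\vartheta^\kappa)$. But the ``identical argument'' needs the second-order remainder to vanish in $L^1$ after scaling by $\kappa^{1/2}$, and for that you invoke the class \eqref{eq:UI2}. That hypothesis is only assumed for the candidate $(\varphi^\kappa)$; competitors are required to satisfy \eqref{eq:UI} \emph{only}. So as written the remainder control for $(\vartheta^\kappa)$ is unjustified.

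The fix is immediate and is precisely what the paper does: for competitors you do not need equality, only the upper bound. Since $U$ is concave, the second-order term $\tfrac12 U''(\xi^\kappa)(\Delta^\kappa)^2$ is nonpositive, so the Taylor expansion (or directly concavity) gives
\[
E\bigl[U(X^{OW,\kappa}_T(\vartheta^\kappa))\bigr] \le E\bigl[U(X^{AC,\kappa}_T(\vartheta^\kappa))\bigr] + E\bigl[U'(X^{AC,\kappa}_T(\vartheta^\kappa))\,\Delta^\kappa\bigr],
\]
and now only \eqref{eq:UI} is needed to show the last term is $o(\kappa^{-1/2})$. The chain of inequalities then closes as you wrote it. In short: use concavity (one-sided) for competitors, and reserve the two-sided Taylor argument with \eqref{eq:UI2} for the candidate, where genuine equality is required.
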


\begin{proof}
Let $(\vartheta^\kappa)_{\kappa>0}$ be any competing family of strategies. Concavity of the utility function $U$ and asymptotic optimality of $\varphi^\kappa$ in the limiting Almgren/Chriss model from Theorem \ref{thm:main} yield 
\begin{align*}
E\left[U\left(X^{OW,\kappa}_T(\vartheta^\kappa)\right)\right] &\leq E\left[U(X^{AC,\kappa}_T(\vartheta^\kappa)\right]+E\left[U'(X^{AC,\kappa}_T(\vartheta^\kappa))(X^{OW,\kappa}_T(\vartheta^\kappa)-X^{AC,\kappa}_T(\vartheta^\kappa))\right]\\
&\leq E\left[U(X^{AC,\kappa}_T(\varphi^\kappa)\right]+E\left[U'(X^{AC,\kappa}_T(\vartheta^\kappa))(X^{OW,\kappa}_T(\vartheta^\kappa)-X^{AC,\kappa}_T(\vartheta^\kappa))\right]\\
&\qquad +o(\kappa^{-1/2}).
\end{align*}
By Remark \ref{rem:main} and Lemma \ref{lem:strat}, we have $\kappa^{1/2}\left[X^{OW,\kappa}_T(\vartheta^\kappa)-X^{AC,\kappa}_T(\vartheta^\kappa)\right] \longrightarrow 0$ as $\kappa \longrightarrow \infty$, in probability. Moreover, $U'[X^{AC,\kappa}_T(\vartheta^\kappa)] \longrightarrow U'[x+\int_0^T \vartheta^\infty_t dS_t]$ in probability by Lemma \ref{lem:strat}, so that the product of these two terms converges to zero in probability. Together with the assumed uniform integrability \eqref{eq:UI}, this shows convergence in $L^1(P)$, and in turn
\begin{equation}\label{eq:bound}
E[U(X^{OW,\kappa}_T(\vartheta^\kappa)] \leq E[U(X^{AC,\kappa}_T(\varphi^\kappa)]+o(\kappa^{-1/2}).
\end{equation}

Now, consider the candidate family $(\varphi^\kappa)_{\kappa>0}$. A second-order Taylor expansion yields 
\begin{align*}
E\left[U\left(X^{OW,\kappa}_T(\varphi^\kappa)\right)\right]  &=  E\left[U(X^{AC,\kappa}_T(\varphi^\kappa)\right]+E\left[U'(X^{AC,\kappa}_T(\varphi^\kappa))(X^{OW,\kappa}_T(\varphi^\kappa)-X^{AC,\kappa}_T(\varphi^\kappa))\right]\\
&\qquad +\frac{1}{2} E\left[U''(\xi)(X^{OW,\kappa}_T(\varphi^\kappa)-X^{AC,\kappa}_T(\varphi^\kappa))^2\right],
\end{align*}
for some (random) $\xi$ between $X^{OW,\kappa}_T(\varphi^\kappa)$ and $X^{AC,\kappa}_T(\varphi^\kappa)$. As above, it follows that the first-order term is of order $o(\kappa^{-1/2})$. For the second-order term, recall that risk aversion is bounded from above ($-U''/U' \leq C$ for some $C>0$) and the marginal utility $U'$ is increasing. Whence:
\begin{align*}
&\left| E\left[U''(\xi)(X^{OW,\kappa}_T(\varphi^\kappa)-X^{AC,\kappa}_T(\varphi^\kappa))^2\right] \right|\\
&\qquad \qquad  \leq  C E\left[ \left(U'[X^{OW,\kappa}_T(\varphi^\kappa)]+U'[X^{AC,\kappa}_T(\varphi^\kappa)]\right)\left(X^{OW,\kappa}_T(\varphi^\kappa)-X^{AC,\kappa}_T(\varphi^\kappa)\right)^2\right].
\end{align*}
Again by Remark \ref{rem:main} and Lemma \ref{lem:strat}, the integrand of the right-hand side is of order $o(\kappa^{-1/2})$ in probability. Due to the uniform integrability assumed in \eqref{eq:UI2}, this convergence also holds in $L^1(P)$. As a result, 
$$E[U(X^{OW,\kappa}_T(\varphi^\kappa)] = E[U(X^{AC,\kappa}_T(\varphi^\kappa)]+o(\kappa^{-1/2}).$$
Combining this with \eqref{eq:bound}, the asserted asymptotic optimality of the family $(\varphi^\kappa)_{\kappa>0}$ follows.
\end{proof}

The prerequisites of Theorem \ref{thm:portfolio} only ask for sufficient integrability. Hence, they are satisfied automatically if all primitives of the model are uniformly bounded:

\begin{mycor}\label{cor:opt}

Suppose the resilience $K$ and the height $h$ of the order book are both uniformly bounded and bounded away from zero.

Consider a family $(\vartheta^\kappa)_{\kappa>0}$ of policies as in \eqref{eq:strategy2}. If the corresponding trading rate $M$ is bounded and bounded away from zero, and the frictionless target strategy $\vartheta^\infty$ as well as its drift and diffusion coefficients $\mu^{\vartheta^\infty}$, $\sigma^{\vartheta^\infty}$ are bounded, then the corresponding class \eqref{eq:UI} is uniformly integrable. 

If the volatility $\sigma^S$ and the frictionless risk-tolerance process $R$ are bounded and bounded away from zero, and the frictionless optimizer $\varphi^\infty$ is a bounded It\^o process with bounded drift and diffusion coefficients, then \eqref{eq:UI} and \eqref{eq:UI2} are uniformly integrable for the family \eqref{eq:strategy}.
\end{mycor}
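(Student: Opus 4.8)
The approach is to deduce the claimed uniform integrability from convergence in $L^1(P)$: whenever $A_\kappa\to 0$ in $L^2(P)$ and $(B_\kappa)_{\kappa>0}$ is bounded in $L^2(P)$, Cauchy--Schwarz gives $A_\kappa B_\kappa\to 0$ in $L^1(P)$, and an $L^1(P)$-convergent family is uniformly integrable. For \eqref{eq:UI} I would take $A_\kappa=\kappa^{1/2}\bigl[X^{OW,\kappa}_T(\vartheta^\kappa)-X^{AC,\kappa}_T(\vartheta^\kappa)\bigr]$ and $B_\kappa=U'\bigl[X^{AC,\kappa}_T(\vartheta^\kappa)\bigr]$, and for \eqref{eq:UI2} I would take $A_\kappa=\kappa^{1/2}\bigl(X^{OW,\kappa}_T(\varphi^\kappa)-X^{AC,\kappa}_T(\varphi^\kappa)\bigr)^2$ and $B_\kappa=U'[X^{OW,\kappa}_T(\varphi^\kappa)]+U'[X^{AC,\kappa}_T(\varphi^\kappa)]$. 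For the second claim one further notes that the candidate family \eqref{eq:strategy} is the special case of \eqref{eq:strategy2} with trading rate $M_t=\sqrt{K_t h_t(\sigma^S_t)^2/(2R_t)}$ --- continuous, positive, and bounded and bounded away from zero precisely under the stated bounds on $K$, $h$, $\sigma^S$, $R$ --- and frictionless target $\vartheta^\infty=\varphi^\infty$, a bounded It\^o process with bounded coefficients, so the estimates below apply to it verbatim.

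The null-sequence property of the $A_\kappa$ in $L^2(P)$ is where I would invoke the earlier machinery. Under the hypotheses of the corollary, $1/(Kh)$ is uniformly bounded, and the estimate in the proof of Lemma~\ref{lem:strat}, run with Doob's $L^q$-inequality and the Burkholder--Davis--Gundy inequality in place of their $L^2$ counterparts, shows that the family \eqref{eq:strategy2} satisfies \eqref{eq:rate} with $\kappa^{1/4}|\vartheta^\infty_t-\vartheta^\kappa_t|$, and hence $\kappa^{-1/4}d\vartheta^\kappa_t/dt=\kappa^{1/4}M_t(\vartheta^\infty_t-\vartheta^\kappa_t)$, bounded in $L^q(P)$ uniformly in $\kappa>0$ and $t\in[0,T]$ for every $q<\infty$. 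Feeding these uniform bounds into the uniform error estimate underlying Theorem~\ref{thm:main}, in the $\kappa^{1/4}$-rate form of Remarks~\ref{rem:main} and~\ref{rem:main2}, yields $\kappa^{1/2}\bigl[X^{OW,\kappa}_T(\vartheta^\kappa)-X^{AC,\kappa}_T(\vartheta^\kappa)\bigr]\to 0$ in $L^q(P)$ for every $q<\infty$; squaring gives the same for $\kappa^{1/2}\bigl(X^{OW,\kappa}_T(\vartheta^\kappa)-X^{AC,\kappa}_T(\vartheta^\kappa)\bigr)^2$.

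It remains to bound the $B_\kappa$ in $L^2(P)$, uniformly in $\kappa$. Bounded absolute risk aversion $-U''/U'\le C$ forces $U'(w)\le U'(0)e^{Cw^-}$, so it suffices to bound $E\bigl[e^{2C(X^{AC,\kappa}_T(\vartheta^\kappa))^-}\bigr]$ and $E\bigl[e^{2C(X^{OW,\kappa}_T(\vartheta^\kappa))^-}\bigr]$ uniformly. In the symmetric frictionless setup considered here, $X^{AC,\kappa}_T(\vartheta^\kappa)=x+\int_0^T\vartheta^\kappa_t\,dS_t-\int_0^T(K_t h_t)^{-1}M_t^2(\vartheta^\infty_t-\vartheta^\kappa_t)^2\,dt$: here $\vartheta^\kappa=\vartheta^\infty-(\vartheta^\infty-\vartheta^\kappa)$ is uniformly bounded, so with $\mu^S$, $\sigma^S$ bounded (which is ensured by the regularity of $S$ assumed for Theorem~\ref{thm:portfolio}) the stochastic integral $\int_0^T\vartheta^\kappa\,dS$ has exponential moments bounded uniformly in $\kappa$, while the cost term is nonnegative and tends to zero in $L^q(P)$ for every $q<\infty$; the resulting bound on $E[e^{2C(X^{AC,\kappa}_T)^-}]$ then transfers to $X^{OW,\kappa}_T(\vartheta^\kappa)=X^{AC,\kappa}_T(\vartheta^\kappa)+o(\kappa^{-1/2})$ in $L^q(P)$. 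Combining the two inputs yields $A_\kappa B_\kappa\to 0$ in $L^1(P)$ and hence the uniform integrability of \eqref{eq:UI} for any admissible competing family, and of both \eqref{eq:UI} and \eqref{eq:UI2} for the candidate family \eqref{eq:strategy}. The one step requiring genuine care --- and which I expect to be the main obstacle --- is controlling the \emph{exponential} moments of the cost term $\int_0^T(K_t h_t)^{-1}M_t^2(\vartheta^\infty_t-\vartheta^\kappa_t)^2\,dt$ uniformly in $\kappa$: this goes slightly beyond the $L^2$-bound recorded in Lemma~\ref{lem:strat} and would follow from a quantitative estimate on $\int_0^T|\vartheta^\infty_t-\vartheta^\kappa_t|^2\,dt$, obtained e.g.\ by applying It\^o's formula to $|\vartheta^\infty-\vartheta^\kappa|^2$ and exploiting the fast mean reversion at rate $\kappa^{1/2}M$ to see that this quantity is of order $O(\kappa^{-1/2})$ with sub-exponential tails.
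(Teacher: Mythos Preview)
Your overall strategy---Cauchy--Schwarz factorization, the exponential bound $U'(x)\le C e^{-Bx}$ from bounded risk aversion, and the $L^2(P)$-convergence of $\kappa^{1/2}[X^{OW,\kappa}_T-X^{AC,\kappa}_T]$ via Remarks~\ref{rem:main} and~\ref{rem:main2}---is exactly the paper's. The difference is that the obstacle you flag at the end does not exist, and your argument actually already contains the observation that dissolves it.

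You note that $\vartheta^\kappa$ is uniformly bounded, but your justification via $\vartheta^\kappa=\vartheta^\infty-(\vartheta^\infty-\vartheta^\kappa)$ together with the $L^q$-bounds on $\kappa^{1/4}(\vartheta^\infty-\vartheta^\kappa)$ only gives boundedness in $L^q$, not almost surely. The paper obtains \emph{almost sure} uniform boundedness directly from the variation-of-constants formula (equivalently, Gronwall): since $\vartheta^\kappa_t = e^{-\kappa^{1/2}\int_0^t M_s\,ds}\,\vartheta^\infty_0 + \int_0^t \kappa^{1/2}M_s\, e^{-\kappa^{1/2}\int_s^t M_r\,dr}\,\vartheta^\infty_s\,ds$ is a convex average of values of the bounded process $\vartheta^\infty$, one has $\sup_{\kappa,t}|\vartheta^\kappa_t|\le\sup_t|\vartheta^\infty_t|$ almost surely. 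Once this is in place, $\vartheta^\infty-\vartheta^\kappa$ is uniformly bounded almost surely, and hence the cost integrand $(K_th_t)^{-1}M_t^2(\vartheta^\infty_t-\vartheta^\kappa_t)^2$ is uniformly bounded by a deterministic constant. The cost integral over $[0,T]$ is then bounded by a constant almost surely, so its exponential moments are trivially uniform in $\kappa$---no It\^o-formula argument or tail estimate is needed. More directly, the drift and diffusion coefficients of the It\^o process $X^{AC,\kappa}(\vartheta^\kappa)$ are uniformly bounded, and Novikov's condition yields the required uniform bound on $E[\exp(-2B\,X^{AC,\kappa}_T(\vartheta^\kappa))]$. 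The same uniform a.s.\ bound on $\vartheta^\kappa$ is also what makes Remark~\ref{rem:main2} applicable without further upgrading Lemma~\ref{lem:strat} to $L^q$.
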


In particular, Theorem \ref{thm:portfolio} applies in the setting of \cite[Theorem 8.1]{moreau.al.14}.\footnote{The liquidation penalty of \cite{moreau.al.14} is negligible at the leading order $O(\kappa^{-1/2})$ for strategies of the form \eqref{eq:strategy2} under the stated assumptions. Indeed, it follows from the same arguments as in the proofs of Lemma \ref{lem:strat} and Corollary \ref{cor:opt} that the penalty is of order $O(\kappa^{-1})$ in this case.}

\begin{proof}[Proof of Corollary \ref{cor:opt}]
Since the risk aversion $-U''/U'$ of the utility $U$ is bounded and bounded away from zero, $U'(x) \leq C \exp(-Bx)$ for some constants $B, C>0$ (cf.\ \cite[Remark 2.3]{kuehn.muhlekarbe.14}). Combining this with the Cauchy-Schwarz inequality, it therefore suffices to show that the families $\exp(-2B X^{AC,\kappa}_T(\vartheta^\kappa))$, 
$\kappa>0$ and $(\kappa^{1/2}[X^{OW,\kappa}_T(\vartheta^\kappa)-X^{AC,\kappa}(\vartheta^\kappa)])^2$, $\kappa>0$ are bounded in $L^1(P)$ to prove the first part of the assertion. To see this, first recall that
$$X^{AC,\kappa}_T(\vartheta^\kappa)=x+\int_0^T \vartheta^\kappa_t dS_t -\int_0^T \frac{(\dot{\vartheta}^\kappa_t)^2}{\kappa K_t h_t}dt.$$
Next notice that, since the frictionless target strategy $\vartheta^\infty$ is bounded and the trading rate $M$ is bounded and bounded away from zero, it follows from Gronwall's lemma that the family $\vartheta^\kappa$, $\kappa>0$ is uniformly bounded as well. Hence, the drift and diffusion coefficients of the It\^o processes $X^{AC,\kappa}_T(\vartheta^\kappa)$ are uniformly bounded, so that the $L^1(P)$-boundedness of $\exp(-2B X^{AC,\kappa}_T(\vartheta^\kappa))$ follows from Novikov's condition. The corresponding $L^1(P)$-bound for $(\kappa^{1/2}[X^{OW,\kappa}_T(\vartheta^\kappa)-X^{AC,\kappa}(\vartheta^\kappa)])^2$ follows from the $L^2(P)$-convergence in Remark~\ref{rem:main2}.

The above arguments apply, in particular, to the family \eqref{eq:strategy} if the volatility $\sigma^S$, the risk-tolerance process $R$, as well as the resilience $K$ and the height $h$ of the order book are uniformly bounded and bounded away from zero. In this case, the uniform integrability of \eqref{eq:UI2} for the family \eqref{eq:strategy} also follows along the same lines.
\end{proof}

 \bibliographystyle{abbrv}
\bibliography{mms}

\end{document}